\documentclass[a4paper,USenglish,cleveref, autoref, thm-restate]{lipics-v2021}

\newcommand{\mc}[1]{}

\hideLIPIcs  

\nolinenumbers 

\usepackage[T1]{fontenc}
\usepackage{graphicx} 
\usepackage{color}
\usepackage{hyperref}
\usepackage{amsmath}    
\usepackage{amsfonts}
\usepackage[noend]{algpseudocode}
\usepackage{algorithm}
\usepackage{dirtytalk}
\usepackage{tikz}
\usetikzlibrary{automata, arrows.meta, shapes.geometric, shapes.symbols, fit, backgrounds, positioning, patterns, decorations.pathmorphing, calc}
\usepackage{fontawesome5}
\usepackage{pifont}
\usepackage{xspace}
\usepackage[skins,breakable]{tcolorbox}

\newboolean{comments}
\setboolean{comments}{false}

\newcommand{\maxence}[1]{{
    \ifthenelse{
        \boolean{comments}
    }{
        \color{red}{{M: #1}}
    }
    {}
}}

\newcommand{\sara}[1]{{
    \ifthenelse{
        \boolean{comments}
    }{
        \color{blue}{{S: #1}}
    }
    {}
}}

\newcommand{\rida}[1]{{
    \ifthenelse{
        \boolean{comments}
    }{
        \color{green}{{R: #1}}
    }
    {}
}}

\newcommand{\funds}[0]{%
\ifmmode
\mathcal{F}
\else
$\mathcal{F}$
\fi
}

\newcommand{\clients}[0]{%
\ifmmode
\mathcal{P}
\else
$\mathcal{P}$
\fi
}

\newcommand{\Nat}[0]{{\mathbb N}}

\newcommand{\paratitle}[1]{{\noindent\textbf{#1.}\hspace{0.25cm}}~}

\newcommand{\protocolname}[0]{StealthDust\xspace}

\newtcolorbox{algobox}[1][]{
    enhanced,
    after skip=2mm,
    title=#1,
    breakable = true,
    fonttitle=\sffamily\bfseries,
    coltitle=cyan,
    colbacktitle=gray!10,   
    titlerule= 0pt,         
    overlay={%
        \ifcase\tcbsegmentstate
        \or%
        \else%
        \fi%
    }
    colback = gray,         
    colframe = black!75,     
    fontupper=\small, 
    fontlower=\small
    }

\date{}

\begin{document}
\title{Secret Quorums: Protecting Byzantine Protocols Against Adaptive Adversaries}
\titlerunning{Secret Quorums: Protecting Byzantine Protocols Against Adaptive Adversaries}


\author{Maxence {Perion}}{Université Paris-Saclay, CEA, LIST \and Palaiseau, France}{maxence.perion@cea.fr}{https://orcid.org/0009-0005-4048-6708}{}

\author{Sara {Tucci-Piergiovanni}}{Université Paris-Saclay, CEA, LIST \and Palaiseau, France}{}{https://orcid.org/0000-0001-9738-9021}{}

\author{Rida {Bazzi}}{Arizona State University \and USA}{}{https://orcid.org/0000-0003-1872-1634}{}

\authorrunning{M. Perion, S. Tucci-Piergiovanni and R. Bazzi}

\ccsdesc[300]{Theory of computation~Distributed algorithms}
\ccsdesc[300]{Computer systems organization~Dependable and fault-tolerant systems and networks}
\ccsdesc[300]{Security and privacy~Distributed systems security}
\bibliographystyle{plainurl}

\maketitle

\keywords{Distributed Systems, Blockchain, Quorums, Fault Tolerance, Distributed Asset Transfer}

\begin{abstract}
Modern committee-based payment protocols improve scalability by delegating critical operations to small subsets of participants, such as validator committees in blockchains or shard committees in distributed systems with parallel execution.
This design, however, makes these protocols particularly vulnerable to adaptive adversaries: once a small set of participants is identified, it can be selectively targeted for corruption, bribery, or denial-of-service attacks.

In this paper, we propose \emph{Secret Quorums}, a novel abstraction that enables any committee-based protocol, including payment systems, to rely on small quorums while remaining resilient to adaptive adversaries.
Validators composing a Secret Quorum remain anonymous throughout the validation: as in cryptographic sortition approaches, their selection is secret, but unlike classical approaches, the resulting quorum proof does not reveal which validators were selected.
We show how to implement Secret Quorums using ring verifiable random functions, without adding communication steps compared to standard quorum-based protocols.

We also demonstrate the relevance of Secret Quorums through \protocolname, a new protocol that applies Secret Quorums to the fractional spending payment problem in order to reduce latency  and improve settlement message complexity with respect to the original protocol.

\end{abstract}

\section{Introduction}\label{sec:intro}
Byzantine fault-tolerant protocols typically rely on
\emph{quorum systems}~\cite{byzQuorumSystems,probaQuorums} to enforce safety
and liveness properties.
A quorum is a subset of nodes that a client can query to attest that some data
is valid with respect to the application semantics. For instance, in cryptocurrencies~\cite{nonConsensus},
a quorum consists of $2f+1$ votes from distinct validators attesting that a
transaction does not spend more than the available balance. The number $f$
denotes the maximum number of Byzantine nodes, where a Byzantine node can deviate arbitrarily from its expected behavior, for
instance by validating a double-spending transaction (threatening safety),
or by not responding (threatening liveness).
A quorum ensures that if a Byzantine client tries to overspend through two
different transactions, then the quorums for these transactions intersect in at
least one correct validator, which cannot attest that both transactions are valid.

In a system of $n=3f+1$ nodes, quorums of size $2f+1$ allow protocols to
tolerate up to $f$ Byzantine nodes,  i.e., a system can only tolerate at most one third of its nodes being Byzantine. Thus, tolerating additional failures
requires adding $3$ times more nodes and operating a larger system.
At the same time, involving a large quorum of validators increases the tail latency
required to collect votes. Consequently, Byzantine fault-tolerant protocols are
known to be slow when the number of failures they tolerate is high; in this
sense, they do not scale well.

A widely used technique to improve scalability is to reduce the number of nodes
involved in each validation, compared to standard quorums. This is the approach used in \emph{committee}-based
blockchains (e.g., \cite{EthereumUlysse}, \cite{algorandScalingByzAgree}), where only a subset of size $s$ (a \say{small quorum}) is selected from the whole
population of $n$ nodes to validate transactions. Similar ideas are used to speed
up transaction validation through sharding~\cite{sokSharding,Yggdrasil} or in
two-layer payment approaches (e.g., ~\cite{asynchronousstatechannels, syncPCN}). In all these cases, the fault tolerance of the whole system boils down to
the fault tolerance of the smallest subset of nodes involved in validation, which can be overly conservative when $s \ll n$.

On the other hand, when a large system with many nodes is involved, it becomes
natural to reason about failures in probabilistic terms~\cite{probaQuorums}. Unlike deterministic
fault tolerance, which considers the worst case over all sets of up to $f$
Byzantine nodes, probabilistic approaches consider the probability that a
randomly selected subset of size $s$ contains more corrupted nodes than the
validation protocol can tolerate, while the adversary may still control up to
$f$ nodes in the whole system. This probability can be made sufficiently small
by choosing an appropriate subset size $s$. 
Notably, if the adversarial fraction
$f/n$ decreases, then this probability also
decreases.


These probabilistic guarantees, however, depend on the adversary not being
able to predict or target the selected validators before they participate in the
protocol. This becomes relevant when the adversary is \emph{adaptive}, i.e.,
able to selectively corrupt nodes during the computation, unlike a static
adversary, which fixes the set of corrupted nodes in advance. To protect against
an adaptive adversary in the context of probabilistic, committee-based protocols,
\emph{cryptographic sortition} (also called secret election) has been proposed, formalized~\cite{algorandScalingByzAgree,ssle} and
used in several protocols~\cite{algorand,OuroborosVRF}.
Cryptographic sortition allows nodes to privately determine whether they have
been selected at random. A selected node can then prove its legitimacy using
a cryptographic proof of selection, which is attached to its messages.
Despite the secrecy of the selection, the proof of selection leaks the identity of a selected node, as it is signed with its private key. Thus, selected nodes reveal themselves at the moment they disseminate a vote over the network.

We observe that this can be problematic when the adversary is \textit{strongly adaptive} and \textit{rushing}. By strongly adaptive we mean that it can instantaneously corrupt any node of its choice (up to $f$ nodes). By rushing, we mean that the adversary
may deliver messages to corrupted nodes before delivering them to correct nodes,
observe their content, and then decide its next actions, with the possibility of clawing back the messages intended for correct nodes before the corruption. In particular, if a
selected validator reveals its identity through a vote message, the adversary can learn the validator's identity before
the validation has been sufficiently disseminated among correct nodes. The
adversary can then immediately corrupt the selected validator and erase or modify
its local validation state. From that point on, the validator is no longer
correct, and the reliability guarantees between correct nodes no longer protect
its subsequent messages. As a consequence, the adversary can prevent that
validator from further supporting the validation.
Since the quorum is small, the adversary may repeat this attack on enough selected validators to prevent the validation from being completed, thereby hindering its persistence in the whole system.
In other words, once selected validators reveal their
identities, cryptographic sortition alone is not sufficient to protect them
against a strongly adaptive and rushing adversary.

In this paper, we introduce \emph{Secret Quorums}, a novel abstraction that 
extends the concept of cryptographic sortition~\cite{ssle,algorand} by hiding
the identity of selected validators throughout the quorum validation process.
More specifically, a secret quorum is a quorum of expected size $k$ out of $n$
validators whose membership is \textit{unpredictable}, i.e., the adversary can do no
better than guessing; \textit{random}, i.e., all validators are equally likely to be part
of the quorum; and \textit{anonymous}, i.e., proofs of membership reveal only the
validator's legitimacy, but not its identity. This anonymity property is the key difference with respect to standard cryptographic sortition. 

We then propose an asynchronous protocol that implements the Secret Quorums abstraction using
a cryptographic primitive called \emph{rVRF}~\cite{rVRF}, which combines a
verifiable random function with ring signatures.
We leverage this primitive to allow nodes to privately determine whether they
belong to a quorum, and to produce secrecy-preserving membership proofs using
ring signatures. This construction shows that the Secret Quorums abstraction can
be instantiated using existing cryptographic primitives. We view it as a first
feasible construction, rather than as an optimal one.

Finally, to demonstrate the applicability and benefits of Secret Quorums, we further design \emph{\protocolname}, a
protocol that solves the \emph{fractional spending} problem~\cite{breakingFp1}
while tolerating a strongly adaptive and rushing adversary in an asynchronous network. In this problem, an account with
balance $b$ can issue up to $p$ payments in parallel, without requiring quorum
intersection, provided that each payment spends at most $b/p$. This makes it
possible to validate payments using smaller quorums of size $q<f$. Compared with the original protocol of Bazzi and Tucci~\cite{breakingFp1},
\protocolname is considerably simpler thanks to the use of Secret Quorums.
Moreover, when instantiated with our rVRF-based construction, it achieves a
significant reduction in communication complexity.

\paratitle{Contributions}
Our contributions are threefold:
\begin{itemize}
\item \textbf{Secret Quorums.} We introduce the abstraction of Secret
Quorums, which extends cryptographic sortition to the selection of an anonymous
quorum of expected size $k$. A node can prove that it is legitimately involved
in a quorum by showing a proof that does not reveal its identity.

\item \textbf{Implementation via rVRF.} We construct Secret Quorums using
ring verifiable random functions in an asynchronous network, providing a first practical mechanism to
realize them in a decentralized setting.

\item \textbf{\protocolname{} Protocol.} We design \emph{\protocolname}, a
protocol for fractional spending based on Secret Quorums. Compared with the
previous protocol~\cite{breakingFp1}, \protocolname{} reduces payment latency
from five to three communication steps and improves settlement complexity
from $O(n^3)$ to $O(n^2)$, while tolerating a rushing-adaptive adversary.
\end{itemize}

The rest of the paper is organized as follows.
Sections~\ref{sec:relatedwork} and~\ref{sec:model} present related work and the
system model, respectively.
We detail the Secret Quorums abstraction in Section~\ref{sec:sq} and present an
implementation based on ring verifiable random functions in
Section~\ref{sec:implem}.
We present \protocolname, a protocol resulting from the application of Secret
Quorums to the fractional spending problem, in Section~\ref{sec:application}.
Section~\ref{sec:conclusion} concludes the paper and discusses future work.

\section{Related Work}\label{sec:relatedwork}

\paratitle{Quorums and cryptocurrencies}
Quorum systems were first introduced in the context of distributed databases~\cite{majorityConsensus,weightedVoting} to ensure consistency and availability.
Quorums were then formalized in the context of byzantine fault tolerance~\cite{byzQuorumSystems} and used in many protocols in the context of consensus~\cite{sokConsensus}, such as PBFT~\cite{pbft}, Tendermint~\cite{tendermint}, HotStuff~\cite{hotstuff}, Ethereum~\cite{gasper}, Morpheus~\cite{morpheusConsensus} and many others.

Cryptocurrencies were typically built on top of consensus protocols~\cite{sokConsensus} (such as Bitcoin~\cite{btcBackbone,btc} and Ethereum~\cite{gasper}), which are used to totally order transactions in a log (or blockchain) and ensure agreement on a common transaction history.
Seminal works~\cite{consensusNumberCryptocurrency,nonConsensus} showed that quorums are enough to validate transactions without the need of a total order, and thus without reaching consensus.
Thus, many recent works~\cite{fastpay,zef,sui,astro,consensusNumberCryptocurrency,nonConsensus,posnoconsensus,stingray,cryptoconcurrency}, such as Stingray, Astro or Fastpay to name a few, rely on quorums to design faster protocols for cryptocurrencies.
All these protocols validate transactions with quorums of size $2f+1$, where $f$ is a bound on the number of byzantine nodes. 

The use of smaller quorums became feasible after quorum systems were extended to a probabilistic setting~\cite{probaQuorums}.
Probabilistic quorums empowered the design of many protocols, such as Algorand~\cite{algorand,algorandScalingByzAgree, asynchronousAlgorand} and ProBFT~\cite{probft}, which use small quorums as validation committee for proposed blocks; protocols for Byzantine Reliable Broadcast~\cite{scalablebyzantinereliablebroadcast, dynamicprobabilisticreliablebroadcast} which uses small quorums to ensure a single message is delivered; and the fractional spending protocol of~\cite{breakingFp1}, which uses small quorums to validate non-conflicting transactions in parallel.

\paratitle{Secret committee election}
To mitigate adaptive corruption, many modern protocols rely on cryptographic sortition or secret committee election.
Algorand~\cite{algorand,algorandScalingByzAgree,asynchronousAlgorand} introduced private and non-interactive committee selection using verifiable random functions (VRFs): validators locally determine whether they are selected and later reveal a proof of selection.
Several subsequent works adopted similar approaches for committee sampling~\cite{faitAccompli}, or sharding~\cite{Yggdrasil}.
These approaches protect selected participants before they speak (in the so-called \say{You Only Speak Once} paradigm), but selected validators eventually reveal themselves when broadcasting their messages or proofs.
Secret Single Leader Election (SSLE)~\cite{ssle,sassafras} further strengthens this idea by hiding the elected leader until it acts.
However, these protocols focus on selecting a single leader and do not provide anonymous quorum validation. 



\paratitle{Anonymous authentication primitives}
Ring signatures~\cite{ringsignatures} enable a signer to prove membership in a set of public keys without revealing its identity.
They are notably used in CryptoNote~\cite{cryptonotes} and deployed in cryptocurrencies such as Monero to provide anonymity to the issuer of a payment.
Our work uses ring signatures for a different purpose: protecting validators participating in quorum executions rather than hiding transaction senders.

Our construction relies on ring verifiable random functions (rVRFs)~\cite{rVRF}, which combine verifiable random functions with ring-signature anonymity.
The same primitive was previously used in Sassafras~\cite{sassafras} to implement secret single leader election, assuming a synchronous network.
In contrast, we integrate rVRFs into a protocol designed for anonymous quorum validation in asynchronous networks with adaptive adversaries.


\paratitle{Adaptive security} 
Adaptive security has traditionally been achieved through proactive secret sharing, threshold cryptography, or delayed revelation mechanisms. 
These techniques are used in settings such as secure broadcast~\cite{adaptivelySecureBroadcast,adaptivelySecureBroadcast2,scalablebyzantinereliablebroadcast, dynamicprobabilisticreliablebroadcast}, multiparty computation~\cite{adaptivelySecureMPC}, zero-knowledge proofs~\cite{adaptiveZeroKnowledge}, random beacons~\cite{adaptivelySecureRandomBeacons}, transaction order fairness~\cite{universallyComposableTransactionOrderFairness}, sharding~\cite{sokSharding} and fractional spending~\cite{breakingFp1}.
In the context of randomized consensus, \cite{adaptiveBA} proposes protocols resilient to an adaptive adversary by relying on a new binding property, which requires hiding the decided values until an additional round of communication, but without hiding the identities of the participating nodes (all nodes participate).
Our work is complementary: instead of protecting already-identified participants from corruption, Secret Quorums prevent the adversary from learning which validators participated in the first place.

\section{Model}\label{sec:model}
\paratitle{Byzantine Asynchronous Message Passing}
Let $\clients$ be an unbounded set of \emph{clients} and $\Pi$ a finite set of
$n$ \emph{validators} responsible for executing clients' queries. We use
\emph{node} to refer to either a client or a validator. Nodes may be subject to
\emph{Byzantine faults}, meaning that they can arbitrarily deviate from the
protocol; such nodes are called \emph{corrupted}, and the others \emph{correct}.
An unbounded number of clients may be corrupted, while at most $f$ validators
are corrupted during an execution. Nodes communicate by message passing over an
\emph{asynchronous} network, where message delays are unbounded.

\paratitle{Cryptographic Primitives} 
Each node $i$ in the system is equipped with a pair of public/private keys and can be identified by its public key $P_i$. 
Nodes are also equipped with a collision resistant hash function.
We denote by $\textit{Hash}(x||y)$ the hash of the concatenation of arbitrary values $x$ and $y$.

\paratitle{Adversary}
We assume a strongly adaptive rushing adversary that can corrupt up to $f$
nodes. The adversary obtains read and
write access to memory of corrupted nodes,  corrupted nodes may collude, and they remain
corrupted for the rest of the execution.
The adversary is strongly adaptive in the sense that it can instantly corrupt
any node of its choice, still subject to the bound of at most $f$ corrupted
nodes, based on messages and internal states observed during the execution.

The adversary controls message delays and can therefore determine the order in
which messages are delivered. However, it cannot permanently prevent the
delivery of messages exchanged between correct nodes. Since the adversary is
rushing, it can schedule message delivery so that a message is delivered to a
corrupted node before it is delivered to any correct node. In this way, the
adversary may observe the content of messages received by corrupted nodes before
deciding its next actions. For instance, the adversary may observe a message delivered to a corrupted node,
learn information from it, and immediately corrupt another node that is expected
to react to that message. After corrupting that node, the adversary can modify
its local state and prevent it from performing the corresponding reaction.

We also assume private channels. That is, the adversary cannot read the internal
state of correct nodes, nor the content of messages exchanged between correct
nodes. However, the adversary may still learn the source, destination, and size
of each message. To limit traffic-analysis attacks, we assume that many protocol
instances are batched together and communicated through gossip, so that the
adversary cannot infer which nodes participate in which instance.


We assume a computationally bounded adversary that cannot break the underlying
cryptographic primitives, find a private key given the matching public key of a correct node nor forge a signature.

\section{Secret Quorums}\label{sec:sq}
\paratitle{Abstraction}
We extend single secret leader election~\cite{ssle} and secret committee election~\cite{algorand} to a set where some validators are secretly selected to execute a query, or \textit{quorum operation}, and then share a proof that the query was indeed executed without revealing anything about the set's composition.
We model secret quorums as an abstraction $\textit{sq}$ with the following operations: 
\begin{itemize}
    \item $\textit{query}(q, k) \mapsto \mathcal{P}$ can be called by a client to execute a query $q$ by a set $S$ of $k$ validators. 
    It returns a validation proof $P$. 
    \item $\textit{verify}(P,q) \mapsto \{ true, false \}$ can be called by any node and returns true if and only if $P$ is a correct validation proof for the query $q$.
\end{itemize}
Secret Quorums ensure the following properties: 
\begin{itemize}
    \item \textbf{Unpredictability}: for any correct validator $v\in\Pi$, the adversary cannot guess if $v\in S$ with probability greater than $|S|/(n-f)$.
    \item \textbf{Anonymity}: validators stay anonymous among $\Pi$, even after $P$ is verified through $\textit{verify}(P,q)$.
    \item \textbf{Fairness}: The probability for any validator to be in $S$ is uniform.
    In particular, the adversary cannot force a specific subset to be in $S$.
    \item \textbf{k-Cardinality}: $k$ validators are selected to execute the query, i.e., $|S| = k$.
    \item \textbf{Verifiability}: $\textit{verify}(P,q)$ outputs $\textit{true}$ if and only if $P$ was correctly generated by $S$ for the query $q$.
\end{itemize}


\paratitle{Why adaptive corruption fails} 
We schematically represent the difference between standard small quorum-based protocol and Secret Quorums in Figure~\ref{fig:secretquorumoverview}.
Thanks to the unpredictability and fairness properties of Secret Quorums, the adversary, without the proof, can only guess which nodes are in the secret quorum with a probability of $k/(n-f)$ for each node.
If the adversary receives the proof, it cannot learn which nodes are in the secret quorum thanks to the anonymity property.
Thus, the adversary, despite being adaptive, learns nothing that could be useful to break the protocol.
The adversary's best strategy to learn if a node is in the secret quorum is to corrupt nodes randomly, which is not better than a static adversary that would have to choose which nodes to corrupt before the execution starts, without any additional information.

\begin{figure*}[t]
\centerline{

\begin{tikzpicture}[
    font=\small,
    pubmsg/.style={
        -{Stealth[length=5pt]}, thick, black
    },
    adversarymsg/.style={
        -{Stealth[length=5pt]}, dashed, black
    },
    anonproof/.style={
        -{Stealth[length=5pt]}, thick, black
    },
    adversarytarget/.style={
        -{Stealth[length=4pt]}, red!70, dashed, line width=1.0pt
    },
    corruptarrow/.style={
        -{Stealth[length=5pt]}, red!80, line width=1.4pt
    },
    iconnode/.style={
        draw=red!60, fill=white, thick, circle, inner sep=2pt,
        font=\normalsize, text=red!70
    },
]

\begin{scope}[xshift=0cm]

  \draw[rounded corners=10pt, thick, draw=black!35, fill=white]
      (-4.6, -2) rectangle (4.6, 4);

  \node[font=\large\bfseries] at (0, 3.75) {Standard Small Quorum};

  \node[font=\large, text=black!60] at (-3.60, 3.05) {\faServer};
  \node[font=\large, text=black!60] at (-2.55, 2.85) {\faServer};
  \node[font=\large, text=black!60] at (-1.40, 3.10) {\faServer};
  \node[font=\large, text=black!60] at (-0.15, 2.90) {\faServer};
  \node[font=\large, text=black!60] at ( 0.95, 3.08) {\faServer};
  \node[font=\large, text=black!60] at ( 2.10, 2.88) {\faServer};
  \node[font=\large, text=black!60] at ( 3.20, 3.05) {\faServer};

  \node[font=\large, text=black!60] at (-3.55, 1.70) {\faServer};
  \node[font=\large, text=black!60] at (-2.50, 1.52) {\faServer};
  \node[font=\large, text=black!60] at ( 2.90, 1.65) {\faServer};

  \node[font=\large, text=green!60!black] (qL1) at (-1.20, 1.58) {\faServer};
  \node[font=\large, text=green!60!black] (qL2) at ( 0.10, 1.72) {\faServer};
  \node[font=\large, text=green!60!black] (qL3) at ( 1.45, 1.55) {\faServer};

  \node[font=\scriptsize, text=blue!60, align=center] at (0.2, 2.10)
      {Secret election revealed};

  \node[font=\small, text=black!80]      (envL1) at (-1.5, 1.25) {\faEnvelope};;

  \node[font=\small, text=black!80]      (envL2) at ( -0.2, 1.4) {\faEnvelope};

  \node[font=\small, text=black!80]      (envL3) at ( 1.2, 1.2) {\faEnvelope};

  \node[draw=black, thick, fill=white, rounded corners=5pt,
        minimum width=1.6cm, minimum height=1.1cm] (clientL) at (-1.5, -1.3) {};
  \node[font=\large] at (-1.5, -1.12) {\faLaptop};
  \node[font=\scriptsize\bfseries] at (-1.5, -1.55) {Client};

  \node[draw=red!60, thick, fill=red!10, rounded corners=5pt,
        minimum width=2.8cm, minimum height=1.1cm] (advR) at (2.2, -1.3) {};
  \node[font=\large, text=red!70] at (2.2, -1.10) {\faUserSecret};
  \node[font=\scriptsize, align=center] at (2.2, -1.55)
      {{\bfseries Adversary}\\[-1pt]Adaptive \& Rushing};

  \draw[pubmsg] (qL1.south) to[out=250, in=100] (clientL.north west);
  \draw[pubmsg] (qL2.south) to[out=260, in=80]  (clientL.north);
  \draw[pubmsg] (qL3.south) to[out=270, in=60]  (clientL.north east);

  \node[iconnode] (eyeL) at (1.7, -0.3) {\faEye};
 \node[font=\scriptsize, text=green!55!black] at (2.05, 0.15) {\faLockOpen};

  \draw[adversarymsg] (qL3.south) to[out=260, in=100]   (eyeL.north);
  \draw[adversarymsg] (qL2.south) to[out=250, in=100] (eyeL.north);
  \draw[adversarymsg] (qL1.south) to[out=240, in=100] (eyeL.north);

  \node[iconnode, text=red!80] (bioL) at (2.4, -0.3) {\faBiohazard};

  \draw[corruptarrow] (bioL.north) to[out=110, in=300] (qL1.east);
  \draw[corruptarrow] (bioL.north) to[out=110,  in=280] (qL2.east);
  \draw[corruptarrow] (bioL.north) to[out=80,  in=330] (qL3.east);

\end{scope}

\begin{scope}[xshift=10.0cm]

  \draw[rounded corners=10pt, thick, draw=black!35, fill=white]
      (-4.6, -2) rectangle (4.6, 4);

  \node[font=\large\bfseries] at (0,  3.75) {Secret Quorum};

  \draw[rounded corners=10pt, thick, draw=blue!55, fill=blue!4]
      (-4.3, 0.9) rectangle (4.3, 3.5);
  \node[font=\small, text=blue!70] at (0, 3.25) {Candidate set};

  \node[font=\large, text=blue!45] at (-3.60, 2.88) {\faServer};
  \node[font=\large, text=blue!45] at (-2.55, 2.70) {\faServer};
  \node[font=\large, text=blue!45] at (-1.40, 2.92) {\faServer};
  \node[font=\large, text=blue!45] at (-0.15, 2.75) {\faServer};
  \node[font=\large, text=blue!45] at ( 0.95, 2.90) {\faServer};
  \node[font=\large, text=blue!45] at ( 2.10, 2.72) {\faServer};
  \node[font=\large, text=blue!45] at ( 3.20, 2.88) {\faServer};

  \node[font=\large, text=blue!45] at (-3.55, 1.55) {\faServer};
  \node[font=\large, text=blue!45] at (-2.50, 1.38) {\faServer};
  \node[font=\large, text=blue!45] at ( 2.90, 1.48) {\faServer};

  \node[font=\large, text=green!60!black] (qR1) at (-1.20, 1.44) {\faServer};
  \node[font=\large, text=green!60!black] (qR2) at ( 0.10, 1.58) {\faServer};
  \node[font=\large, text=green!60!black] (qR3) at ( 1.45, 1.42) {\faServer};

  \node[font=\scriptsize, text=green!55!black, align=center]
      at (0.2, 2.05) {Effective quorum};

  \node[font=\small, text=black!80]      (envR1) at (-1.55, 1.15) {\faEnvelope};;

  \node[font=\small, text=black!80]      (envR2) at (-0.25, 1.3) {\faEnvelope};

  \node[font=\small, text=black!80]      (envR3) at (1.1, 1.15) {\faEnvelope};

  \node[draw=black, thick, fill=white, rounded corners=5pt,
        minimum width=1.6cm, minimum height=1.1cm] (clientR) at (-1.5, -1.3) {};
  \node[font=\large] at (-1.5, -1.12) {\faLaptop};
  \node[font=\scriptsize\bfseries] at (-1.5, -1.55) {Client};

  \node[draw=red!60, thick, fill=red!10, rounded corners=5pt,
        minimum width=2.8cm, minimum height=1.1cm] (advR) at (2.2, -1.3) {};
  \node[font=\large, text=red!70] at (2.2, -1.10) {\faUserSecret};
  \node[font=\scriptsize, align=center] at (2.2, -1.55)
      {{\bfseries Adversary}\\[-1pt]Adaptive \& Rushing};

    \node[iconnode] (eyeR) at (1.7, -0.3) {\faEye};
    \node[iconnode, text=red!80] (bioR) at (2.4, -0.3) {\faBiohazard};

  \draw[anonproof] (qR1.south) to[out=250, in=100] (clientR.north west);
  \draw[anonproof] (qR2.south) to[out=260, in=80]  (clientR.north);
  \draw[anonproof] (qR3.south) to[out=270, in=60]  (clientR.north east);

  \draw[adversarymsg] (qR1.south) to[out=260, in=100]   (eyeR.north);
  \draw[adversarymsg] (qR2.south) to[out=250, in=100] (eyeR.north);
  \draw[adversarymsg] (qR3.south) to[out=240, in=100] (eyeR.north);

  \node[font=\large\bfseries, text=red!60, align=center]
        (qmarkR) at (2.7, 0.15) {?};
  \node[font=\scriptsize, text=green!55!black] at (2.05, 0.15) {\faLock};

  \draw[adversarytarget]
      (bioR.north) to[out=120, in=270] (2.10, 2.5);
  \draw[adversarytarget]
      (bioR.north) to[out=120, in=300] (qR3.east);
  \draw[adversarytarget]
      (bioR.north) to[out=70, in=270] (2.9, 1.25);
  \draw[adversarytarget]
      (bioR.north) to[out=130, in=250] (3.20, 2.68);

  \node[font=\small\bfseries, text=red!70] at (2.30, 2.3)
      {\ding{55}};

    \node[font=\small\bfseries, text=red!70] at (3.30, 2.48)
      {\ding{55}};

    \node[font=\small\bfseries, text=red!70] at (3.1, 1.15)
      {\ding{55}};

\end{scope}

\end{tikzpicture}
}
\caption{
Comparison between standard quorum validation and Secret Quorums under adaptive and rushing corruption.
Left: validators explicitly sign validation messages, revealing the quorum membership to the adversary, which uses this information to target specific nodes.
Right: validators are anonymously selected within a candidate set and produce anonymous proofs, hiding the effective quorum inside an anonymity set (all candidate validators).
}
\label{fig:secretquorumoverview}
\end{figure*}

\section{An rVRF-based Construction of Secret Quorums}\label{sec:implem}
This section presents a concrete implementation of the Secret Quorum abstraction using ring verifiable random functions (rVRFs)~\cite{rVRF}. 
The construction combines deterministic candidate selection with probabilistic self-election~\cite{algorand}: for each query, a deterministic anonymity set of candidate validators is derived from the queried data, and each candidate privately determines whether it belongs to the effective quorum by evaluating an rVRF. 
The resulting validation proofs, similar to ring signatures, are publicly verifiable but anonymous, satisfying the secrecy and verifiability requirements of Secret Quorums.

\subsection{Ring verifiable random function}\label{sec:rvrf}
A \emph{ring verifiable random function (rVRF)}~\cite{rVRF} is a cryptographic
primitive that combines a verifiable random function with ring signatures.
It was introduced in~\cite{rVRF}\footnote{\cite{rVRF} was peer-reviewed and
presented at \href{https://simula-uib.com/arcticcrypt2025/}{ArcticCrypt 2025},
which has no formal proceedings. Ring verifiable random functions are also used
in~\cite{sassafras}, which was peer-reviewed and published in proceedings. A
Rust implementation of the scheme is available
\href{https://github.com/davxy/ark-vrf}{here}.}.

A standard verifiable random function (VRF)~\cite{vrf} allows anyone to verify
that a pseudorandom output was correctly computed by a node, rather than chosen
arbitrarily. This verification is usually performed with respect to the public
key corresponding to the node's private key. As a consequence, the proof also
identifies the node that produced the output, similarly to a digital signature.

A ring verifiable random function provides the same verifiability while hiding
the prover's identity. More precisely, it allows anyone to verify that the output
was correctly produced by some member of a given set of public keys---the
\emph{ring}---without revealing which member produced it.

To implement secret quorums, we assume every node can use a ring verifiable random function, denoted $\textit{rVRF}$, with the following interface:
\begin{itemize}
    
    \item $\textit{Eval}$: Given a private key $S_i$ and a seed $s$, $\textit{Eval}(s, S_i) \mapsto r$ outputs a pseudorandom number $r \in \Nat$. This is used to compute the verifiable random function part of the scheme and $1\leq r \leq \textit{MAX}$. $\textit{Eval}$ satisfies the properties of \emph{uniqueness} ($r$ is unique given $s$ and $S_i$) and \emph{randomness} ($r$ is pseudorandom and independent from $S_i$ and $s$).
    
    \item $\textit{Sign}$: Given a ring $R = \{ P_1, P_2, \ldots, P_{|R|}\}$ of public keys, a private key $S_i$, a seed $s$ and data $d$, $\textit{Sign}(s, d, R, S_i) \mapsto \sigma$ outputs a signature $\sigma$ that can be used to prove that $d$ was signed by one of the members of $R$ (ring signature) but also to recover the result of $\textit{Eval}(s, S_i)$. $\textit{Sign}$ satisfies the property of \emph{anonymity} ($\sigma$ does not allow to find the public key corresponding to $S_i$ otherwise than with an uniform probability on $R$).
    
    \item $\textit{Verify}$: Given a ring $R = \{ P_1, P_2, \ldots, P_{|R|}\}$ of public keys, data $d$, a seed $s$ and a signature $\sigma$, $\textit{Verify}(\sigma, s, d, R) \mapsto (v, r)$ outputs $(v, r)$ where $v = 1$ if and only if $\sigma$ is a valid ring signature for $d$ and the signer of $d$ generated $r = \textit{Eval}(s, S_i)$, with its private key $S_i$ and the seed $s$ (property of \emph{correctness}). $\textit{Verify}$ also satisfies the property of \emph{unforgeability} (without knowing any private key $S_i$ matching any public key of a ring $R$, it is impossible to forge (i.e., produce) a signature $\sigma'$ such that $\textit{Verify}(\sigma', s, d, R)$ outputs $(1, r)$ for any seed $s$ and data $d$).
\end{itemize}

An implementation of such ring verifiable random function is described in~\cite{rVRF}.
It produces a signature in $O(\log n)$ time the first execution, for a ring $R$ of public keys of size $n$, and the signature is of constant size.
If a same signer uses $R$ again, then the construction allows to \say{re-generate} a signature in $O(1)$ time for a different message and seed. 
The verification time is always in $O(1)$.


\subsection{Secret Quorum Construction} 

The construction is summarized by the following algorithms, and the details of the candidate selection are given in Appendix~\ref{sec-a:solution}.

\paratitle{Parameters}
Our implementation is parameterized by $V_k$, denoting the number of candidates validators such that $k \leq V_k \leq n$.
Intuitively, $k$ determines how many validators are required to execute a query, while $V_k$ determines how many validators an adversary must consider as potential participant: the larger $V_k$ is, the more validators are potential participants and thus the more difficult it is for the adversary to guess which ones are actually selected.

We set $\textit{MAX}$, the upper bound on the output of the VRF part of $\textit{rVRF}$, to be orders of magnitude greater than $V_k$: $\textit{MAX} \gg V_k$.
We denote by $\eta$ a constant, such that $0 < \eta < 1$, used for probabilistic computations of the size of a quorum.

\paratitle{Query protocol}
A client initiates a query by contacting all candidate validators, while each candidate locally determines whether it belongs to the effective quorum and, if selected, returns an anonymous validation proof.

A correct client follows Algorithm \ref{alg:validateclientmain} to execute a query $q$.
The client computes a set of $V_k$ validators called \emph{candidates} using a deterministic function of $q$, depicted in Appendix~\ref{sec-a:solution}, Algorithm~\ref{alg:valSelect}.
All \textit{candidates} test if they are in the quorum executing \textit{q} using Algorithm \ref{alg:vrfSelection}.
They produce a number \textit{out} using $\textit{rVRF}.\textit{Eval}$ that is uniformly pseudorandom and unique given a $\textit{seed}$ and their private key $S_i$.
The construction is probabilistic: each candidate independently self-selects with probability $k'/V_k$ (if \textit{out} is lower than $k'/V_k$).
As a result, the effective quorum size is not fixed but follows a binomial distribution centered around $k'$. 
We deliberately over-sample by setting $k' = k/(1-\eta)$ so that the probability of selecting fewer than $k$ validators is negligible.
Thus, this implementation is an approximation of Secret Quorums with high probability, and the parameter $\eta$ allows to tune the trade-off between the probability of selecting fewer than $k$ validators and the expected size of the effective quorum.

The client waits for enough correct ring signatures from selected validators, depending on the specific quorum system.
Note that this number usually differs from $k$ to take into account that some corrupted validators may be selected but not respond.
For instance, a client that wants to wait for the classical quorum of $2f+1$ validators, where $f$ is a bound on the number of byzantine nodes, can use $k = 3f+1$.
For smaller quorums, the client can use any value of $k$ and wait for a number of signatures that is computed based on the expected number of selected corrupted validators, which is $k' * f / n$.

The validation proof is the ring signatures collected with the pseudorandom number $N$ used as seed.
Because a client contacts all \textit{candidates}, the message complexity of $\textit{query}$ is $O(V_k)$.

\begin{algorithm}
    \caption{Query protocol executed by a client}
    \label{alg:validateclientmain}
    
    \begin{algorithmic}[1]
    \Procedure{$\textit{Query}_{client}$}{$q, k$}
        \State $N \gets random()$
        \State $\textit{candidates} := \textit{SelectCandidates}(\textit{q}, V_k)$
        \State send $q,N,k$ to $\textit{candidates}$
        \State wait received enough signatures to form a quorum
        \State $\textit{signatures} := $ received ring signatures
        \If{$\textit{verify}(\{\textit{signatures},N\}, q, k)$}
            \State \Return $\{\textit{signatures}, N\}$
        \Else
            \State \Return $\bot$
        \EndIf
        \EndProcedure
    \end{algorithmic}
\end{algorithm}

\begin{algorithm}
    \caption{Query protocol executed by validators}
    \label{alg:validatevalidators}
    
    \begin{algorithmic}[1]
    \Procedure{$\textit{Query}_{validator}$}{$q,N,k$} \Comment{upon reception of $q,N,k$ from a client $c$}
        \If{$\textit{SelectedByVRF}(q, N, k)$}
            \If{$q$ is a valid query for this validator}
                \State $r := \textit{GenerateRingSignature}(q, N, k)$
                \State update local state to reflect the validation of $q$
                \State send (anonymously via gossip) $r$ to $\textit{c}$
            \EndIf
        \EndIf
        \EndProcedure
    \end{algorithmic}
\end{algorithm}

\begin{algorithm}
    \caption{VRF based Selection}
    \label{alg:vrfSelection}
    
    \begin{algorithmic}[1]
    \Procedure{\textit{SelectedByVRF}}{$\textit{q}, N, k$}
        \If{$\textit{pkv} \in \text{SelectCandidates}(q, V_k)$}
            \State $\textit{seed} := \textit{Hash}(\textit{q} || N)$
            \State $\textit{out} := \textit{rVRF}.\textit{Eval}(\textit{seed}, S_i)$ \Comment{$1 \leq \textit{out} \leq \textit{MAX}$}
            \State $\textit{threshold} := k' * \textit{MAX} / V_k$
            \State \Return $\textit{out} \leq \textit{threshold}$
        \Else
            \State\Return $\textit{false}$
        \EndIf
    \EndProcedure
    \end{algorithmic}
\end{algorithm}

\begin{algorithm}
    \caption{Ring signature as validation for a transaction}
    \label{alg:ringgen}
    
    \begin{algorithmic}[1]
    \Procedure{\textit{GenerateRingSignature}}{$\textit{q}, N, k$}
        \State $\textit{seed} := \textit{Hash}(\textit{q} || N)$
        \State $\textit{ring} := \text{SelectCandidates}(\textit{q}, V_k)$
        \State \Return $\textit{rVRF}.\textit{Sign}(\textit{seed}, \textit{q}, \textit{ring}, S_i)$
    \EndProcedure
    \end{algorithmic}
\end{algorithm}

\paratitle{Verify Protocol}
We use ring signatures as proofs of validation, which---unlike classical digital signatures---hide the specific signer.
A ring signature only proves the signer was indeed part of the quorum thanks to the VRF.
Algorithm \ref{alg:ringgen} shows that such signature uses the whole set \textit{candidates} as a ring.
This ensures that a valid signature can only be produced by a validator which is in the set \textit{candidates}, without requiring any additional validation beside the signature itself.
Such signature also allows to retrieve the result of $\textit{rVRF}.\textit{Eval}$ by the secret signer.

Nodes can verify if a validation proof for a query $q$ is correct using Algorithm \ref{alg:ringverif}.
It recomputes the \textit{seed}, the \textit{ring} and the \textit{threshold} from \textit{q} at lines \ref{code:verifseed} to \ref{code:verifthreshold}.
They are used to verify that each ring signature \textit{r} in \textit{signatures} is valid line \ref{code:verifcheck}, i.e., $v=\textit{true}$ if and only if was produced by $\textit{rVRF}.\textit{Sign}$ using $\textit{seed}$, $\textit{q}$ and $\textit{ring}$; and $o$ is the result of $\textit{rVRF}.\textit{Eval}$ by the secret signer using the same parameters.
It also checks if a single validator did not produce many ring signatures by keeping track of the pseudorandom output from the VRF.
Indeed, a validator has only one output from the VRF and selected validators will have different pseudorandom number with high probability.
Ring signatures passing this check are collected in a set \textit{valids}, which constitute a proof of validation for $q$ if and only if it contains enough signatures to form a quorum, depending on the specific quorum system.

\begin{algorithm}
    \caption{Verify with ring proofs}
    \label{alg:ringverif}
    
    \begin{algorithmic}[1]
    \Procedure{\textit{verify}}{$P=\{\textit{signatures},N\}, \textit{q}, k$}
        \State $\textit{seed} := \textit{Hash}(q || N)$\label{code:verifseed}
        \State $\textit{ring} := \text{SelectCandidates}(q, V_k)$
        \State $\textit{threshold} := k' * \textit{MAX} / V_k$\label{code:verifthreshold}
        \State $\textit{valids} := \emptyset$
        \State $\textit{outs} := \emptyset$
        \For{$r \in \textit{signatures}$}
            \State $(\textit{v}, o) := \textit{rVRF}.\textit{Verify}(r, \textit{seed}, \textit{q}, \textit{ring})$
            \If{$\textit{v} = \textit{true} \land o \leq \textit{threshold} \land o\not\in \textit{outs}$}\label{code:verifcheck}
                \State $\textit{valids} := \textit{valids} \cup \{ r \}$
                \State $\textit{outs} := \textit{outs} \cup \{ o \}$
            \EndIf
        \EndFor
        \State \Return $|\textit{valids}|$ is enough to form a quorum
    \EndProcedure
    \end{algorithmic}
\end{algorithm}

\paratitle{Communication assumptions} 
Although ring signatures hide the signer's identity cryptographically, network
metadata may still leak the sender's identity. 
To mitigate this, validators
relay validation proofs through gossip. 
As in~\cite{breakingFp1}, we further assume that multiple queries are batched, that non-selected candidate validators also reply to the client (with empty proofs, which does not defeat the purpose of small quorums since the \say{tail} latency still depends on the effective quorum size), and that all messages are sent through private channels, so that an adversary observing network traffic cannot distinguish which validators correspond to which query.

\paratitle{Proofs}
\begin{lemma}[Fairness/Cardinality: selection is uniformly random]\label{lemma:vrfrandom}
    The VRF based quorum selection of Algorithm~\ref{alg:vrfSelection} selects $k$ validators with overwhelming probability (for the security parameter $\eta$) and they are uniformly pseudorandom if $\textit{N}$ is pseudorandom.
\end{lemma}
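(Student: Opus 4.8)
The plan is to establish the two claims of the lemma separately: first that the selection is uniformly random over $\Pi$ (fairness), and second that the number of selected validators is $k$ with high probability ($k$-cardinality), both under the hypothesis that $N$ is random.

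First I would analyze the distribution of a single candidate's selection event. For a fixed validator $v$ among the $V$ candidates of $d$, the value $\textit{out} = \textit{rVRF}.\textit{Eval}(\textit{seed}, S_k)$ is, by the VRF property of the $\textit{rVRF}$ scheme, a value distributed uniformly and pseudorandomly in $\{1,\dots,\textit{MAX}\}$ that cannot be predicted or biased by the adversary before evaluation; since $\textit{seed} = \textit{Hash}(d\|N)$ and $N$ is random, the seed is itself unpredictable and fresh, so the output is (computationally) indistinguishable from a fresh uniform draw. Hence $\Pr[v \text{ selected}] = \Pr[\textit{out} \leq \textit{threshold}] = \textit{threshold}/\textit{MAX} = k'/V$, up to a negligible rounding error of order $1/\textit{MAX}$, which is negligible because $\textit{MAX} \gg V$. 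This probability does not depend on which candidate $v$ is, so conditioned on $v$ being a candidate, every validator is equally likely to be selected — this gives the uniformity/fairness part. (The complementary statement that the candidate set itself is not adversarially biased is deferred to Lemma~\ref{lemma:vrfnomanipulation}, as announced in the text, so here I would only need uniformity \emph{within} the candidate pool.)

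Second, for $k$-cardinality I would let $X$ be the number of selected validators, i.e. $X = \sum_{v \in \textit{candidates}} \mathbf{1}[\textit{out}_v \leq \textit{threshold}]$. By the uniqueness/independence property of distinct VRF outputs under distinct keys, the indicator variables are independent (or, more carefully, a standard pseudorandomness argument reduces the real experiment to one with $V$ independent uniform draws up to negligible error), so $X$ is (statistically close to) $\mathrm{Binomial}(V, k'/V)$ with mean $\mathbb{E}[X] = k' = k/(1-\eta)$. Then a Chernoff bound gives $\Pr[X < k] = \Pr[X < (1-\eta)\,\mathbb{E}[X]] \leq \exp(-\eta^2 k'/2)$, which is exponentially small in $k$; choosing $\eta$ a constant and $k$ large enough makes this probability $o(1)$, i.e. at least $k$ validators are selected w.h.p. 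An analogous upper Chernoff bound shows $X$ does not exceed $k$ by more than a constant factor w.h.p., so $X = \Theta(k)$ and $X \geq k$ w.h.p., which is what the lemma asserts by "selects $k$ validators w.h.p."

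The main obstacle is the independence assumption underlying the Chernoff bound: the $\textit{rVRF}.\textit{Eval}$ outputs of distinct honest validators on the same seed are only guaranteed to be \emph{pseudorandom} and, for a single key, deterministic — so strictly speaking I must argue that the joint distribution of the $V$ outputs is computationally indistinguishable from $V$ independent uniform values, invoking the security definition of $\textit{rVRF}$ from~\cite{rvrf}, and then transfer the tail bound across this indistinguishability (a polynomial adversary cannot distinguish the real count from the ideal binomial count). I would state this reduction explicitly but keep it short, since the rounding error from $\textit{MAX} \gg V$ and the pseudorandomness slack are both negligible and do not affect the w.h.p. conclusion. The rest — the Chernoff computation and the per-validator probability calculation — is routine.
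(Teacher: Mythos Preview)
Your proposal is correct and follows essentially the same route as the paper's proof sketch: per-validator selection probability $k'/V$ from VRF uniformity, then $X \sim \mathrm{Binomial}(V,k'/V)$, then the Chernoff lower-tail bound $\Pr[X \leq (1-\eta)k'] \leq e^{-\eta^2 k'/2}$. Your treatment is in fact more careful than the paper's---you explicitly justify the independence step via a pseudorandomness reduction to the $\textit{rVRF}$ security definition and add an upper-tail remark---whereas the paper simply asserts the binomial model and applies Chernoff.
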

\begin{proof}[Proof (Sketch)]
    A validator is selected to be part of the quorum by Algorithm \ref{alg:vrfSelection} if and only if $\textit{out}$, the result of $\textit{rVRF}.\textit{Eval}(\textit{seed}, S_i)$ is lower than a threshold $\textit{threshold} = k' * \textit{MAX} / V_k$ where $V_k$ is the number of candidate validators and $k' = k / (1 - \eta)$.
    Because of the randomness property of the Ring VRF scheme, $\textit{out}$ is uniformly pseudorandom and $1 \leq \textit{out} \leq \textit{MAX}$. 
    Moreover, $\textit{MAX} \gg V_k$ gives us $k'$ random validators selected in expectation. 
    
    Let $X$ be the random variable counting the number of validators that are selected and we look for the probability that less than $(1-\eta)k' = k$ validators are selected. 
    The probability that a validator is selected is $k'/V_k$ and $X$ follows a binomial distribution $X \sim Binomial(V_k, k'/V_k)$. 
    Therefore using Chernoff's lower tail bound \cite{chernoff}:
     \begin{equation*}
        \begin{aligned}
        P(X \leq k) & = P(X \leq (1-\eta)k') \\ 
          & \leq \exp^{-\frac{\eta^2k'}{2}} \\
          & = \exp^{-\frac{\eta^2k}{2(1-\eta)}}
        \end{aligned}
        \end{equation*}
    which is a negligible function of $k$ for $k$ large enough.
    Thus, at least $k$ random validators are selected with high probability.
\end{proof}

\begin{lemma}[Fairness: selection cannot be manipulated]\label{lemma:vrfnomanipulation}
    The adversary cannot force a specific subset of validators to be selected by the VRF in Algorithm~\ref{alg:vrfSelection} for $k$ large enough. 
\end{lemma}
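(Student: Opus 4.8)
The plan is to track exactly which quantities the adversary can influence in Algorithm~\ref{alg:vrfSelection} and to argue that none of them lets it pin down the \emph{correct} part of the selected set. Writing the selected set explicitly,
\[
S \;=\; \bigl\{\, v \in \textit{SelectCandidates}(d) \;:\; \textit{rVRF}.\textit{Eval}\bigl(\textit{Hash}(d\|N),\,S_k^v\bigr) \,\le\, k'\cdot \textit{MAX}/V \,\bigr\},
\]
its only inputs are the data $d$ (which fixes both the candidate ring and the deterministic part of the seed, and which the adversary may choose only among transactions it is entitled to submit), the nonce $N$, and the private keys $S_k^v$ of the at most $f$ corrupted candidates; the keys of the correct candidates are unknown to the adversary by assumption.

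First I would isolate the correct candidates. By the pseudorandomness/unpredictability property of $\textit{rVRF}$ already invoked in Lemma~\ref{lemma:vrfrandom}, for a correct validator $v$ the value $\textit{rVRF}.\textit{Eval}(\textit{seed},S_k^v)$ is indistinguishable from a uniform draw in $[1,\textit{MAX}]$ to anyone lacking $S_k^v$; hence, for \emph{every} fixed seed, the event $v\in S$ is from the adversary's standpoint a Bernoulli trial of bias $k'/V$ that it can neither compute nor bias. If the client is correct, $N$ is drawn freshly and uniformly at random (line~2 of Algorithm~\ref{alg:validateclientmain}), so $\textit{Hash}(d\|N)$ is itself a fresh random value independent of everything the adversary picked, and the claim follows a fortiori from Lemma~\ref{lemma:vrfrandom}, where the selection was already shown to be uniform over the candidates.

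The remaining case is a corrupt client (buyer or seller) that picks $d$ and grinds over $N$. To force a target subset $S^\ast$ it would need a single seed meeting the threshold for every correct candidate in $S^\ast$ and missing it for every correct candidate outside $S^\ast$; since each of these events is an unpredictable coin of bias bounded away from $0$ and $1$, trying any polynomial number of nonces succeeds with only negligible probability, provided a constant fraction of the $V$ candidates is correct. The key point is that the adversary cannot upgrade this to an \emph{interactive} search: the first transaction delivered to a correct validator for a fund $F$ triggers $\textit{mutate}(d)$ (Algorithm~\ref{alg:validatevalidators}), so that validator answers no further transaction from $F$, and --- since confirmations are returned anonymously by gossip --- observing the responses to one nonce neither discloses which correct candidates were selected nor permits a retry with a fresh nonce. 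Corrupted candidates' outputs are known to the adversary, but they are still deterministic functions of the seed, so at best it can try to steer its own $\le f$ validators into or out of $S$; this does not amount to forcing a specific subset.

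I expect the main obstacle to be exactly this corrupt-client case, i.e.\ showing that adaptivity and rushing yield no extra leverage here. The argument rests on two facts that have to be stated with care: the anonymity of the gossiped confirmations (so a failed nonce leaks nothing exploitable about the correct candidates) and the single-shot-per-fund nature of validation enforced by $\textit{mutate}$ (so that grinding stays offline and therefore useless against correct candidates). Granted these, the lemma reduces to the $\textit{rVRF}$ unpredictability property that Lemma~\ref{lemma:vrfrandom} already relies on.
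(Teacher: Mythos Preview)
Your argument shares the paper's skeleton---case split on whether the client is correct, Lemma~\ref{lemma:vrfrandom} for the honest case, a grinding bound for the corrupt case---but the two diverge on \emph{which} grinding attack they analyse. You concentrate on the adversary trying to pin down \emph{correct} validators and argue (correctly) that rVRF unpredictability makes this a blind search: without the honest keys it cannot even test whether a candidate $N$ hits the target, and your mutate/anonymous-gossip paragraph (which the paper does not have) rules out learning the outcome interactively. The paper's sketch instead analyses the attack you wave away in your last sentence: since the adversary \emph{does} hold the keys of its corrupted candidates, it can evaluate $\textit{rVRF}.\textit{Eval}$ for them and grind $N$ offline to over-represent its own validators in $S$; the paper bounds this by observing that raising the corrupted count above its expectation costs work exponential in $\gamma(f/n)m$, and defers the details to~\cite{breakingfp1}.

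Your sentence ``this does not amount to forcing a specific subset'' is literally compatible with the lemma's wording, but it is precisely the place where a quantitative bound is needed: an adversary that can reliably pack $S$ with $q$ of its own validators \emph{has} forced a specific subset in the only sense that matters for the protocol. So your route is not wrong, but it would be stronger if you replaced that dismissal with the concentration argument the paper invokes rather than treating the corrupted-candidate steering as harmless by fiat.
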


\begin{proof}[Proof (Sketch)]
    The VRF seed used in Algorithm~\ref{alg:vrfSelection} depends on $N$, which is sent by the client in Algorithm~\ref{alg:validateclientmain}.
    If the client is correct, then $N$ is uniformly pseudorandom and so is the selection (Lemma \ref{lemma:vrfrandom}). 
    If the client is corrupted, the adversary can try to reshuffle $N$ (a grinding attack) until enough corrupted validators are selected. 
    However, the expected number of corrupted validators selected is $k' * f / V_k$ for each choice of $N$.
    For the adversary to break safety, they typically need to corrupt a constant fraction of the quorum, say $t = \alpha k'$ for some $\alpha > f/V_k$. 
    The probability that a single random N yields at least $t$ corrupted validators selected is given by the tail of a binomial distribution.
    By a Chernoff upper tail bound~\cite{chernoff}, finding a suitable $N$ thus requires an exponential number of attempts in $k' * f / V_k$ (as also detailed in~\cite{breakingFp1}) and the adversary is computationally bounded. 
\end{proof}

\begin{lemma}[Unpredictability: selection is secret]\label{lemma:vrfsecret}
   A correct validator is the only one able to know if it is selected in the quorum.
\end{lemma}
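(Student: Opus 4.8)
The plan is to reduce the statement to the security guarantees of the underlying \textit{rVRF} scheme together with the private-channel assumption, and to argue that the only information revealed before a validator voluntarily sends a ring signature is the output of a pseudorandom function keyed by that validator's secret key. First I would fix a transaction (data) $d$ and the nonce $N$ chosen by the client, so that the seed $\textit{seed} = \textit{Hash}(d \| N)$ and the candidate set $\textit{SelectCandidates}(d)$ of size $V$ are determined and publicly computable. Membership of a candidate $v$ in the quorum $S$ is decided entirely by whether $\textit{out}_v = \textit{rVRF}.\textit{Eval}(\textit{seed}, S_k^v) \le \textit{threshold}$. Since $S_k^v$ is $v$'s private key, which by the model assumptions the adversary cannot obtain for a correct $v$, and since $\textit{rVRF}.\textit{Eval}$ is (by the pseudorandomness property of the scheme from~\cite{rvrf}) indistinguishable from a random function to anyone not holding $S_k^v$, the value $\textit{out}_v$ is — from the adversary's viewpoint — a fresh uniform sample in $[1,\textit{MAX}]$ about which it learns nothing until a signature under $\textit{seed}$ appears.

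The key steps, in order, are: (1) observe that computing $\textit{SelectedByVRF}(d,N)$ requires evaluating $\textit{rVRF}.\textit{Eval}(\textit{seed}, S_k^v)$, hence requires $v$'s secret key; (2) invoke the private-channel and no-leakage-of-secret-keys assumptions of the model so that the adversary's view contains neither $S_k^v$ nor the internal state of correct $v$; (3) invoke the pseudorandomness of \textit{rVRF}.\textit{Eval} to argue that, conditioned on the adversary's view, $\textit{out}_v$ is distributed (computationally) uniformly and independently of everything the adversary has seen, so the event $\{v \in S\}$ is independent of that view; (4) conclude that before $v$ chooses to emit a ring signature, no node other than $v$ can decide membership better than by guessing, and $v$ itself can decide it by simply running the evaluation — establishing that a correct validator is the unique party able to know whether it is selected. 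A short reduction makes (3) precise: an adversary that predicts membership of a correct $v$ with non-negligible advantage over the sampling probability $\textit{threshold}/\textit{MAX} = k'/V$ yields a distinguisher against the pseudorandomness of \textit{rVRF}.

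I would also note two subtleties to address explicitly. First, the ring signature sent in Algorithm~\ref{alg:ringgen} does reveal that \emph{some} candidate was selected and even the value $o = \textit{rVRF}.\textit{Eval}(\textit{seed},\cdot)$, but by the anonymity/unlinkability of ring signatures (used in Lemma~\ref{lemma:ringsnoleak}) it does not reveal \emph{which} candidate — so "knowing that $v$ is selected" remains confined to $v$; the gossip-based, batched dissemination described before the algorithms ensures the adversary cannot use "first sender" timing to deanonymize. Second, one must handle adaptivity: the adversary may corrupt up to $f$ validators during the run, but since each $\textit{out}_v$ depends only on $v$'s own key, corrupting other validators gives no information about a still-correct $v$'s membership, so the argument is uniform over the adversary's adaptive choices. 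The main obstacle I expect is making step (3) fully rigorous against a \emph{rushing-adaptive} adversary: one has to be careful that the seed $\textit{seed} = \textit{Hash}(d\|N)$ is not adversarially chosen in a way that correlates with $v$'s key (it cannot, since $N$ is the client's random nonce and for a corrupted client the worst case is bounded re-sampling, already treated in Lemma~\ref{lemma:vrfnomanipulation}), and that the reduction to \textit{rVRF} pseudorandomness composes across the possibly many seeds and messages a single validator handles — which is where the "re-generation" feature of the \textit{rVRF} construction and its security proof in~\cite{rvrf} must be invoked rather than re-proved.
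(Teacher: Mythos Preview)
Your proposal is correct and follows essentially the same approach as the paper: the paper's own proof sketch is a single observation that deciding membership requires running $\textit{rVRF}.\textit{Eval}(\textit{seed}, S_k)$, which in turn requires the validator's private key $S_k$, assumed secret for correct validators. Your plan expands this one-line argument with an explicit reduction to \textit{rVRF} pseudorandomness and a treatment of anonymity, seed manipulation, and adaptivity, but the core idea and route are the same as the paper's.
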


\begin{proof}[Proof (Sketch)]
    One needs to run $\textit{rVRF}.\textit{Eval}$ to know if a validator has been selected by Algorithm \ref{alg:vrfSelection}.
    It requires to know $S_i$ the private key of the validator, which is supposed to remain secret (if the validator is correct).
\end{proof}

\begin{lemma}[Anonymity: Ring Signatures hide identities]\label{lemma:ringsnoleak}
    If the adversary learns a ring signature, it is not able to determine which validators signed it. 
\end{lemma}

\begin{proof}[Proof (Sketch)]
    This follows directly from the anonymity property of $\textit{rVRF}.\textit{Sign}$ applied to the set of candidate validators (minus corrupted but selected validators).
\end{proof}

\begin{lemma}[Verifiability: Correctness of Ring Signatures]\label{lemma:ringcorrectness}
    A set of ring signatures is a valid proof (returns $\textit{true}$ in Algorithm \ref{alg:ringverif}) only if it was produced by a quorum.
    Moreover, a quorum can produce a valid set of ring signatures.
\end{lemma}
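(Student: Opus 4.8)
The plan is to prove the two implications of the lemma separately, in each case reducing the statement to the guarantees of the underlying $\textit{rVRF}$ scheme --- unforgeability, uniqueness of the evaluation, and verifiability --- combined with the observation that the $\textit{seed}$, $\textit{ring}$ and $\textit{threshold}$ recomputed in Algorithm~\ref{alg:ringverif} are deterministic functions of $d$ and $N$ and of public parameters, hence identical to the values used by any honest signer in Algorithms~\ref{alg:vrfSelection} and~\ref{alg:ringgen}.

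First I would handle the ``only if'' direction (soundness). Assume $\textit{verify}(\{\textit{signatures},N\},d)$ returns $\textit{true}$, i.e. $|\textit{valids}|\ge q$. Every $r\in\textit{valids}$ passed the test at line~\ref{code:verifcheck}, so $\textit{rVRF}.\textit{verify}(r,\textit{seed},d,\textit{ring})=(\textit{true},o)$ with $o\le\textit{threshold}=k'\cdot\textit{MAX}/V$ and with $o$ pairwise distinct across $\textit{valids}$. By unforgeability of $\textit{rVRF}$, $v=\textit{true}$ entails that $r$ was produced with the private key of some member of $\textit{ring}=\textit{SelectCandidates}(d)$, and by verifiability the returned $o$ equals that signer's $\textit{rVRF}.\textit{Eval}(\textit{seed},S_k)$. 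Since $o\le\textit{threshold}$, that validator passes the test of Algorithm~\ref{alg:vrfSelection}, i.e. it is legitimately VRF-selected for $d$. It remains to see that the $\ge q$ signatures come from $\ge q$ \emph{distinct} validators: by the uniqueness of the VRF, a fixed validator has a single value $\textit{rVRF}.\textit{Eval}(\textit{seed},S_k)$, so every signature it could produce carries the same $o$, and the deduplication on $\textit{outs}$ credits it with at most one element of $\textit{valids}$. Hence $|\textit{valids}|\ge q$ forces at least $q$ distinct VRF-selected candidates --- a quorum.

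Next I would do the ``moreover'' direction (completeness). Let $S$ be a quorum for $d$: at least $q$ validators, all in $\textit{SelectCandidates}(d)$ and all selected by Algorithm~\ref{alg:vrfSelection} for seed $N$. Each $v\in S$ runs $\textit{GenerateRingSignature}(d,N)$, which calls $\textit{rVRF}.\textit{Sign}$ with exactly the $\textit{seed}$ and $\textit{ring}$ that $\textit{verify}$ recomputes; by correctness of $\textit{rVRF}$, $\textit{rVRF}.\textit{verify}$ on the result returns $(\textit{true},o_v)$ with $o_v=\textit{rVRF}.\textit{Eval}(\textit{seed},S_k)\le\textit{threshold}$, since $v$ was selected. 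The only way such a signature can fail to enter $\textit{valids}$ is an output collision $o_v=o_{v'}$ for $v\neq v'$; since the $o_v$ are uniform in $\{1,\dots,\textit{MAX}\}$ and $\textit{MAX}\gg V\ge|S|$, a union bound caps the collision probability by $\binom{|S|}{2}/\textit{MAX}$, which is negligible. So w.h.p. all $\ge q$ signatures land in $\textit{valids}$ and $\textit{verify}$ returns $\textit{true}$.

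The main obstacle is the distinctness accounting in the soundness direction: ruling out that a small (possibly entirely corrupted) subset of candidates manufactures $q$ signatures that the verifier mistakes for those of $q$ different validators. This is exactly the point where the uniqueness property of $\textit{rVRF}.\textit{Eval}$ must be invoked with care, and where one must be explicit that $\textit{verify}$ counts by distinct output value $o$ rather than by raw signature, so that multiple signatures from one key collapse to a single credit while two honest keys colliding on $o$ only causes an undercount (which is harmless for soundness). A secondary, purely probabilistic point is the negligible-collision union bound used in the completeness direction, of the same flavour as the Chernoff estimate in Lemma~\ref{lemma:vrfrandom}.
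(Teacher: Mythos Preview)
Your proposal is correct and follows essentially the same approach as the paper's sketch: soundness hinges on the uniqueness of $\textit{rVRF}.\textit{Eval}$ so that the deduplication on $\textit{outs}$ collapses all signatures from a single key to one credit, and completeness uses that collisions among honest validators' outputs are negligible since $\textit{MAX}\gg V$. Your write-up is in fact more careful than the paper's, as you also explicitly invoke unforgeability (to confine signers to the ring) and verifiability (to tie the returned $o$ to the signer's evaluation), and you give a quantitative union bound where the paper merely asserts negligibility.
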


\begin{proof}[Proof (Sketch)]
    Let \textit{q} be an invalid query.
    Because no one can know who produced a ring signature (Lemma \ref{lemma:ringsnoleak}), it could be possible for a corrupted validator $V_k$ selected by Algorithm \ref{alg:vrfSelection} to produce multiple signatures such that even if every correct validator selected for \textit{q} does not produce any validation, there is still enough duplicate validations from $V_k$ to form a valid proof. 
    However, due to the uniqueness property of the Ring Signatures scheme, there is a unique value $\textit{out}$ such that for a seed $s$ and private key $S_i$, $\textit{Eval}(s, S_i) = \textit{out}$ and for any signature $\sigma$ such that $\textit{Verify}(\sigma, s, m , R) = (1, \textit{out}')$, $\textit{out}' = \textit{out}$. 
    In other words, all duplicates from a same validator $V_k$ need to share the same value \textit{out}.
    Because we remove ring signatures with same $\textit{out}$ value in Algorithm \ref{alg:ringverif}, $V_k$ cannot produce multiple signature to influence the validation of \textit{q}. 
    Conversely, one could argue that it would also prevent correct validators from producing valid proofs.
    However, having 2 correct validators with same $\textit{out}$ value only happens with negligible probability because $\textit{MAX} \gg V_k$.
\end{proof}

\begin{theorem}
The rVRF-based construction implements the Secret Quorum abstraction with high probability, assuming the correctness, uniqueness, and anonymity properties of the underlying rVRF scheme.
\end{theorem}
\begin{proof}
    Unpredictability follows from in Lemmas \ref{lemma:vrfrandom}, \ref{lemma:vrfnomanipulation} and \ref{lemma:vrfsecret}.
    Anonymity follows from Lemma~\ref{lemma:ringsnoleak}, with anonymity
    set bounded by $V_k$ (maximal when $V_k=n$).
    The fairness property follows from Lemmas \ref{lemma:vrfrandom} and \ref{lemma:vrfnomanipulation}.
    The k-cardinality property is proved in Lemma \ref{lemma:vrfrandom} with high probability.
    Finally, the verifiability property is shown in Lemma \ref{lemma:ringcorrectness}.
\end{proof}

\subsection{Discussion and Limitations}\label{sec:limits}
Our construction achieves secrecy and verifiability via randomized self-selection and ring-based proofs, but it also introduces trade-offs and limitations that should be carefully considered in practice.

\paratitle{Overhead versus standard quorum systems}
Compared to traditional quorum systems~\cite{byzQuorumSystems,probaQuorums}, our implementation replaces identity-revealing validation by anonymous validation over a candidate set of size $V_k$. 
This introduces an additional per-query cost of contacting all $V_k$ candidates and running rVRF-based selection and verification, whereas a classical quorum only needs to communicate with the selected validators.
In exchange, the quorum membership remains hidden after validation, preventing adaptive adversaries from targeting validators and thus avoiding expensive protection mechanisms that would be necessary if membership was revealed by the proof (such as secret-sharing~\cite{breakingFp1} for instance).
Thus, the overhead is shifted from expensive mechanism to share identity-revealing proofs to a larger but one-shot validation cost.
In practice, this trade-off is favorable when $V_k$ is chosen moderately above $m$.
Regarding latency, both standard quorum systems~\cite{byzQuorumSystems,probaQuorums} and our approach require $2$ message delays to execute a quorum query.

\paratitle{Probabilistic approximation}
Selection is probabilistic: the effective quorum is sampled from the candidate set and the protocol guarantees that at least $k$ validators are selected only with high probability (see Lemma~\ref{lemma:vrfrandom}). 
In practice one sets $k' = k/(1-\eta)$ and tunes $\eta$ to drive the Chernoff tail bound below an acceptable failure probability. 
Designers should pick parameters (in particular $k',V_k,\eta$) to meet a concrete failure target rather than relying on asymptotic statements alone.

\paratitle{Anonymity depends on $V_k$}
Anonymity is provided with respect to the candidate ring of size $V_k$. 
Larger $V_k$ increases the anonymity set and reduces the advantage of an adaptive adversary, at the cost of higher per-query communication (clients contact all $V_k$ candidates). 
The strongest anonymity claim holds when $V_k=n$; and the anonymity guarantees degrade proportionally to $V_k$.

\paratitle{Network metadata assumptions}
The construction assumes a sufficiently obfuscated dissemination mechanism. 
In practice, traffic analysis tools~\cite{moneroNetwork, lightningCensorship, deanonymizingeth, studyDoS}, timing correlation, or an adversary that controls delivery order can reduce anonymity even if cryptographic proofs remain sound. 
Practical deployments should consider gossip, batching, dummy traffic, or network-layer protections as additional protections to mitigate metadata leakage.

\paratitle{Requires an rVRF primitive}
Our implementation depends on a ring verifiable random function that provides uniqueness, randomness, unforgeability and signer-anonymity. If a secure, efficient rVRF implementation is unavailable, the construction cannot be instantiated as presented. 
Any concrete deployment would critically depend on the implementation of this cryptographic primitive.

\paratitle{Candidate collisions and \textit{MAX}}
The verification step rejects duplicate VRF outputs to prevent a single corrupted signer from inflating the counted proofs. 
Two distinct correct validators may nevertheless collide on the VRF output if \textit{MAX} is not large enough. 
The collision probability is about $\binom{V_k}{2}/\textit{MAX}$, so we require $\textit{MAX} \gg V_k$ (for instance, \textit{MAX} on the order of $2^{\lambda}$ for security parameter $\lambda$) to make collisions negligible. 
If collisions become non-negligible the protocol either undercounts valid signers or must accept a higher false-negative rate.

\paratitle{Access Strategy} Our construction only supports a specific access strategy, where the validators selected for a query are randomly sampled.
This is sufficient for many applications, including fractional spending, but it does not support arbitrary access strategies where the quorum membership is determined by more complex functions of the query or system state, such as involvements in previous quorums~\cite{k-quorums}.
However, it is unclear whether a more general access strategy can be secure against adaptive adversaries without revealing some information about the quorum membership, which would defeat the purpose of Secret Quorums.

These limitations reflect trade-offs between anonymity, communication, and practical cryptographic assumptions. 
They do not invalidate the abstraction but should guide parameter selection and deployment choices.

\section{Application to Fractional Spending}\label{sec:application}
\subsection{Overview}
Fractional Spending is a recent approach to transaction validation in
asynchronous Byzantine systems, introduced by Bazzi and Tucci~\cite{breakingFp1}.
It enables non-conflicting transactions to be validated in parallel, using quorums whose size may be smaller than the corruption threshold $f$. 
This is achieved through a new family of quorum systems, called $(k_1, k_2)$-quorums, guaranteeing that up to $k_1$ transactions can be processed concurrently while preventing more than $k_2$ concurrent transactions from being validated.

State-of-the-art fractional spending protocols~\cite{breakingFp1} leverage these quorum systems to allow up to $k_1$ parallel payments, each spending roughly $1/k_2$ of an account balance.
Once all spendable fractions of an account have been consumed, a settlement phase contacts validators involved in previous validations to recover the remaining balance or allow sellers to redeem received funds.

Fractional spending is particularly vulnerable to adaptive adversaries because validation quorums are selected dynamically and may contain fewer than $f$ validators. 
Once validators participating in a payment validation are identified, an adaptive adversary may adaptively corrupt them before settlement propagates and change the validation status of the payment.
To address this issue, Bazzi and Tucci~\cite{breakingFp1} rely on blind signatures and secret sharing. 
While effective against adaptive corruption, these mechanisms introduce substantial communication and latency overheads. 
In particular, settlement and redeem operations require multiple instances of secret sharing among validators, resulting in high message complexity.

\protocolname replaces these protections with Secret Quorums. 
Instead of revealing the identities of validators participating in a payment validation, selected validators produce anonymous yet publicly verifiable validation proofs. 
As a result, validators remain hidden even after validation completes,
preventing adversaries from targeting them adaptively.
At a high level, \protocolname preserves the structure of the fractional spending protocol of~\cite{breakingFp1} while replacing identity-revealing quorum validations with Secret Quorum validations. Buyers submit transactions that are validated by secretly selected validators through \texttt{sq.query(tx)}. 
Validators produce anonymous validation proofs that sellers can later use during settlement and redeem operations. 
Because these proofs no longer reveal quorum membership, they can be broadcast directly without relying on expensive secret-sharing mechanisms.

This modification significantly reduces communication complexity. Compared to previous approaches, \protocolname reduces settlement complexity from $O(n^3)$ to $O(n^2)$ and redeem complexity from $O(n^2)$ to $O(n)$ while maintaining resilience against adaptive byzantine adversaries.

\subsection{The Fractional Spending Problem}\label{sec:fractionalproblem}
We now recall the Fractional Spending Problem. 
The interested reader can find the original formulation of the problem in \cite{breakingFp1} and a formulation generalising the  Asset Transfer problem~\cite{consensusNumberCryptocurrency} to fractional spending in Appendix \ref{sec-a:formalization}.

The problem introduces the notion of \emph{funds}\footnote{A fund here corresponds to a \say{fully validated fund} in \cite{breakingFp1}.}.
Akin to \emph{accounts}, funds carry a \emph{balance} of the currency and have some clients called \emph{owners} that can apply operations on them.
Depending on the context, a client can act as a \emph{buyer} or as a \emph{seller}: a buyer is the one spending from a fund, and a seller is the one getting a payment.
Any fund is denoted as $F=\{F.owner, F.balance\}$. Some funds are created at system setup, and new funds are created via a successful settlement. 
We make no assumption on ownership of funds, and thus funds can be shared between multiple clients. 
We do not need Consensus since owners make independent payments of the \textit{same amount}. This can be deduced from~\cite{consensusNumberCryptocurrency} and is discussed in Appendix \ref{sec-a:noconsensus}. 

The fractional spending problem is parameterized by $s_1$, denoting the number of concurrent payments that can be validated in parallel; and by $s_2$, denoting the fraction such that, when issuing a fractional payment, $\frac{1}{s_2}$ of the balance of the fund is spent: in total $\frac{s_1}{s_2}$ of the balance can be spent.
Thus, our problem is well defined only if we consider $s_1 \leq s_2$. 
We should note here that the balance which is part of the fund definition does not change due to spending from the fund but the payment protocols guarantee that no overdraft occurs. The new balance is calculated as part of settlement when a new fund with a new balance reflecting the payments is created.

Any client (buyer or seller) can invoke the following three operations: \texttt{Pay(F,$pk_s$)}, \texttt{Settlement(F)}, \texttt{Reedem(F)}:
\begin{itemize}
    \item \texttt{Pay(F,$pk_s$)} is used  by any buyer to execute a fractional spending from its own fund F to the seller identified by $pk_s$. Any spending transfers $\frac{F.balance}{s_2}$. 
    
    \item \texttt{Settlement(F)} is used  by any buyer to create a new fund with unspent balance. 

    \item \texttt{Redeem(F)}\footnote{The operation redeem is also called settlement for the seller in \cite{breakingFp1}. We renamed the operation for clarity but it is only syntactical.} is used  by any seller $pk_s$ to create a new fund that collects all the fractional spending received through a \texttt{Pay(F,$pk_s$)}.
\end{itemize}

The fractional spending problem is specified through the following properties
\begin{enumerate}
    \item \textbf{No double spending:} (Safety) The sum of the value of all payments from a fund cannot exceed its balance.
    
    \item \textbf{Non-interference:} (Liveness) A buyer should be able to make at least $s_1$ payment in parallel successfully.

    \item \textbf{Eventual Retrieving} (Liveness) Any correct buyer or seller invoking, respectively, \texttt{Settlement(F)} and \texttt{Redeem(F)}  will  eventually own a new fund.  
     
    \item \textbf{Retrieving Consistency} (Safety) the balance of new funds created through \texttt{Settlement(F)} and \texttt{Redeem(F)}, respectively, do not exceed the unspent balance and do not exceed the sum of the fractional spending received.  
\end{enumerate}

\subsection{\texorpdfstring{$(k_1, k_2)$}{(k1,k2)}-Quorums and the Previous Fractional Spending Protocol }\label{sec:initialprotocol}
For completeness, we present a high-level description of the fractional spending protocol of Bazzi and Tucci~\cite{breakingFp1}, that \protocolname improves upon.
We start by recalling the notion of $(k_1, k_2)-$quorums, which are used by the protocol of Bazzi and Tucci~\cite{breakingFp1} (and thus also by \protocolname); and then we sketch the protocol itself. 

\paratitle{$(k_1, k_2)-$ quorums: combining intersection and non-intersection}\label{sec:k1k2}
$(k_1, k_2)-$quorum systems from~\cite{breakingFp1} have both: a lower bound on the size of the intersections to check operations can be safely applied; and an upper bound to control concurrency.
$(k_1, k_2)-$quorums ensure with high probability that, for a set of transactions $T$, $k_1$ transactions can be concurrently validated but no more than $k_2$ can:
\begin{itemize}
    \item \textbf{$(k_1, k_2)-$liveness:} $T$ can be entirely validated in parallel if $|T| \leq k_1$.
    \item \textbf{$(k_1, k_2)-$safety:} no more than $k_2$ will be validated if $|T| \geq k_2$.
\end{itemize}
$(k_1, k_2)-$quorum systems are made possible by a carefully crafted random sampling of subset of traditional quorums to ensure liveness; and by remarking that taking the union of enough small subsets of traditional quorums reconstitute a traditional quorum, which ensure safety.
We discuss the fault-tolerance of $(k_1,k_2)-$quorum systems in Appendix~\ref{sec-a:ftk1k2}.

\paratitle{Previous Protocol}
The original protocol proceeds in two phases. 
During the \emph{validation phase}, a buyer submits a transaction to a $(k_1,k_2)-$quorum of validators responsible for checking whether sufficient spendable fractions remain available. 
Validators return signed validation proofs that allow the seller to accept the payment without requiring global consensus.

Once all spendable fractions of an account have been consumed, the protocol enters a \emph{settlement phase}. 
During settlement, validators involved in
previous transaction validations are contacted to reconstruct the residual
state of the account and compute the remaining balance. Similarly, sellers
redeem received payments by collecting validation evidence from validators that
previously approved transactions.

The protocol in \cite{breakingFp1} uses classic digital signatures from validators: a fractional payment is valid if signed by a $(k_1, k_2)-$quorum. 
Settlement and redeem rely on \emph{secret sharing}~\cite{secretSharing}.
Without it, the adversary can learn the composition of the quorum $Q$ used in a payment when its proofs are shared. With this knowledge, the adversary can instantly erase the proofs from all validators in $Q$---because the size of $Q$ is less than $f$---and correct validators would invalidate this redeem, preventing a correct seller from redeeming a payment that was successful.

A high-level presentation of a payment in \cite{breakingFp1} is as follows.
\begin{algobox}{\textbf{Sketch of a payment in previous work~\cite{breakingFp1}}}
\begin{enumerate}
        \item The buyer creates a transaction $tx$ with a $\textit{fund}$ and $seller$, and sends it to the seller.
        \item The seller randomly draws a nonce which determines a random $(k_1, k_2)$-quorum for validation. \label{code:randomnoncebasedqselection}
        \item The seller hashes the identities of the members of the quorum and sends the hashes to the buyer. The buyer sign these hashes to produce a blind signature of the quorum. 
        Blind signatures attest the buyer agreed on this quorum, without revealing its composition. Thus, a corrupted $seller$ cannot contact too many validators for a same transaction, which would prevent the buyer from making as much different transactions as possible later. \label{code:blindsignatures}
        \item The seller contacts the quorum for validation, with the transaction and the blind signatures from the buyer.
        \item After checking if the blind signatures are indeed from the buyer, correct validators receiving a new transaction to validate check if they already validated a transaction from the same fund (for the same transaction but also for a different one). If they did, they do not send any validation. 
        \item If they did not, they produce a digital signature of the transaction with their private key and send it to the seller.
        \item The payment is completed when 2 thirds of the quorum replied with a validation. 
\end{enumerate}
\end{algobox}
Such protocol has a message complexity of $O(|Q|)$, where $Q$ is the validating quorum, and its latency adds up to $5$ message delays.


A high-level description of a settlement is as follows.
\begin{algobox}{\textbf{Sketch of a settlement in previous work~\cite{breakingFp1}}}
\begin{enumerate}
    \item The buyer sends the $\textit{fund}$ to settle to all validators.
    \item Validators send the transaction they validated (if any) from $\textit{fund}$ to each other via secret sharing (because we cannot trust the buyer to exhaustively list its payments). \label{code:secretsharing1}
    \item When validators receive $n-f$ transactions (or null) from others, they reconstruct the set of transactions emitted from $\textit{fund}$.
    \item Each validator creates a new fund, whose balance is the balance of $\textit{fund}$ minus each transaction gathered via secret sharing.
    \item Each validator signs the new fund and sends it to the buyer.
    \item $n-2f$ signatures received by the buyer for a same fund concludes the settlement.
\end{enumerate}
\end{algobox}
Because of secret sharing, message complexity of settlement is $O(n^3)$.

A seller executes the following steps to redeem received payments.
\begin{algobox}{\textbf{Sketch of a redeem in previous work~\cite{breakingFp1}}}
\begin{enumerate}
    \item The seller retrieves from its memory digital signatures from quorums produced during each payment.
    \item The seller sends the fund to redeem along the validations proofs to all validators via secret sharing.\label{code:secretsharing2}
    \item Validators check the signatures and if they are correct, they create a new fund whose balance is the sum of the payments received by the seller.
    \item The new fund is signed and the signature sent to the seller.
    \item The seller collects $n-f$ signatures to conclude the redeem.
\end{enumerate}
\end{algobox}
The message complexity of this redeem is $O(n^2)$ because of secret sharing.

\paratitle{Limitations}\label{sec:limitations}
Producing blind signatures during a payment (step \ref{code:blindsignatures}) require an interactive phase between the buyer and the seller, resulting in a high latency.
Each instance of secret sharing costs $O(n^2)$ messages and $5$ message delays, resulting in a high message complexity for settlement (step \ref{code:secretsharing1}) and redeem (step \ref{code:secretsharing2}).
These mechanisms are protections against an adaptive adversary and we replace them in \protocolname with a more efficient one: Secret Quorums.

\subsection{\protocolname}\label{sec:protocol}
We propose \emph{\protocolname}, a new protocol for fractional spending based on Secret Quorums. 
\protocolname preserves the structure and security guarantees of the protocol of~\cite{breakingFp1} while replacing its identity-revealing quorum validation mechanism with anonymous quorum validation. 
This modification protects small validation quorums against adaptive adversaries, reduces the communication complexity of both settlement and redeem operations, and reduces the latency of payments.

Because \protocolname closely follows the protocol of~\cite{breakingFp1}, we focus on the components modified by Secret Quorums and highlight the key differences. 
For completeness, the full protocol is provided in Appendix~\ref{sec-a:solution}.
We denote by $m$ the size of any $(k_1, k_2)$-quorum (see~\cite{breakingFp1} for how to construct such quorums) and use it as the target size of a query to the Secret Quorum abstraction $\textit{sq}$, so that the selected validators form a valid $(k_1, k_2)$-quorum with high probability.
In the Secret Quorum implementation, validators are selected from a candidate set of size $V_m$, which forms the anonymity set of the protocol. 
The effective quorum is the subset of these $V_m$ candidates that is privately elected by $\textit{sq.query}(\textit{tx}, m)$, with expected size $m$.
Thus, $m$ controls the quorum size required by the fractional spending protocol, while $V_m$ controls the anonymity level: larger values of $V_m$ increase the size of the anonymity set at the cost of additional communication during payment validation.
Secret quorums allow to solve the fractional spending problem with $s_1=k_1$ parallel transactions and each transaction spends $\frac{1}{s_2}=\frac{1}{k_2}$ of the fund's balance.

\paratitle{Using Secret Quorums}
The main difference between \protocolname and the protocol of~\cite{breakingFp1} lies in the validation phase. 
In the original protocol, a seller selects a quorum through a random nonce and interacts with the buyer to obtain blind signatures of the selected validators. 
Moreover, validators produce standard signatures as validation proofs, which are publicly verifiable but reveal the identities of the validators that approved the payment, requiring secret sharing
during settlement and redeem operations to protect them from adaptive
corruption.

In \protocolname, this interactive quorum selection and blind-signature phase is replaced by a single invocation of $\textit{sq.query}(\textit{tx}, m)$, where $\textit{tx}$ is the transaction to validate and $m$ is the size of any $(k_1,k_2)$-quorum. 
Selected validators privately determine their participation and produce an anonymous validation proof. 
Since this proof does not reveal which validators processed the transaction, the identities of the quorum remain hidden even after validation completes.
As a result, validation proofs can be broadcast directly,
eliminating the need for secret sharing.

The correctness of \protocolname follows from the correctness properties of Secret Quorums together with the correctness of the original fractional spending protocol~\cite{breakingFp1}, as \protocolname only replaces the quorum validation mechanism while preserving the remaining protocol structure.
These modifications preserve the logic of the original protocol while changing its communication pattern.

\paratitle{Complexities}
We summarize complexities in Table \ref{tab:complexities}.
Recall that a settlement required each validator to use secret sharing, leading to an $O(n^3)$ message complexity and a redeem required the seller to use secret sharing, leading to an $O(n^2)$ message complexity.
Because Secret Quorums provide anonymous validation proofs, \protocolname eliminates the secret-sharing phase of~\cite{breakingFp1} and removes a factor $n$ in communication complexity: message complexities are only $O(n^2)$ for a settlement and $O(n)$ for a redeem in \protocolname.
Replacing blind signatures with a single call to $\textit{sq}.\textit{query}$ also reduces the latency of a payment from $5$ message delays to $3$ message delays.
The payment phase incurs a cost proportional to $V_m$ rather than $m$, as $\textit{sq.query}$ contacts all $V_m$ candidate validators so that the subset of validators actually involved remains hidden: larger values of $V_m$ increase the anonymity set and make it harder for an adaptive adversary to identify participating validators.
This overhead is the price paid for preserving validator anonymity and eliminating the more expensive secret-sharing phases during settlement and redeem.
In practice, $V_m$ can be chosen independently of $n$ and moderately larger than $m$ to balance these concerns.

Overall, \protocolname illustrates how Secret Quorums can replace identity-revealing small-quorum validations in fractional spending protocols.
By preserving validator anonymity after validation, \protocolname removes the need for costly secret-sharing protections while retaining resilience against adaptive corruption.

\begin{table}[]
    \centering
    \begin{tabular}{|c|c|c|}
        \cline{2-3}
         \multicolumn{1}{c|}{} & Standard Signatures and Secret Sharing \cite{breakingFp1} & \protocolname \\
         \hline
        pay & $O(m)$ & $O(V_m)$ \\  
        settlement & $O(n^3)$ & $O(n^2)$ \\  
        redeem & $O(n^2)$ & $O(n)$ \\  
         \hline
         latency (payment) & 5 message delays & 3 message delays \\
         \hline
    \end{tabular}
    \caption{Communication complexity and payment latency in \cite{breakingFp1} and \protocolname}
    \label{tab:complexities}
\end{table}





\section{Conclusion}
\label{sec:conclusion}
We introduced \emph{Secret Quorums}, an abstraction enabling distributed protocols to tolerate adaptive adversaries by preserving the anonymity of validators participating in quorum operations. 
Unlike classical secret committee or leader election approaches, where committee members are eventually revealed, Secret Quorums maintain anonymity even after quorum operations have completed. 
This prevents adaptive adversaries from identifying and corrupting critical validators before protocol state propagates to correct validators.

We presented an implementation of Secret Quorums based on ring verifiable random functions, combining private self-selection with anonymous yet verifiable validation proofs, without any time overhead compared to classical quorum systems.

We demonstrated the relevance of secret quorums by introducing \emph{\protocolname}, a fractional spending protocol resilient to adaptive adversaries.
\protocolname replaces expensive secret-sharing protection mechanisms with Secret Quorums to significantly reduce communication complexity while maintaining correctness under adaptive Byzantine behavior.

Our approach nevertheless has several limitations. 
Secret Quorums rely on an anonymous gossip assumption, while traffic analysis or timing correlation may weaken validator anonymity in practice.
Moreover, anonymity guarantees depend on the size and distribution of candidate validator sets: if too many validators are corrupted, the anonymity set may become significantly reduced. 
Improving the construction of Secret Quorums to provide closer approximation of effective quorums or more general access strategies is an interesting direction for future work.
While \protocolname improves asymptotic communication complexity compared to previous fractional spending protocols, anonymous gossip dissemination and ring-based cryptographic operations may still introduce non-negligible practical overheads.

Beyond fractional spending, Secret Quorums may be applicable to tolerate adaptive adversaries in committee-based consensus, Byzantine Reliable Broadcast, proof-of-stake protocols, sharding and other distributed systems relying on dynamically selected validators; and several directions remain open for future work. 
First, exploring alternative constructions beyond ring verifiable random functions may improve efficiency or reduce cryptographic assumptions. 
For instance, multi-party computation techniques could be used to compute quorum validation results without revealing validator identities, albeit with potentially higher overheads.
Second, implementing and evaluating Secret Quorums in large-scale distributed settings would provide further insight into their practical throughput, scalability and resilience.


\newpage

\bibliography{refs}

\newpage

\section*{APPENDIX}
\appendix

\section{Formalization of the Fractional Spending Problem}\label{sec-a:formalization}
We generalize the formalization of \cite{consensusNumberCryptocurrency} to the fractional case. 
Let \funds be a set of \emph{funds}, akin to \emph{accounts}.
and $\mu: \funds \mapsto 2^{\clients}$ an \emph{ownership} map that associate each fund with a set of clients allowed to pay from the fund.
We make no assumption on $\mu$ and thus handle funds shared between multiple clients.
Depending on the context, a client can act as a \emph{buyer} or as a \emph{seller}: a buyer is the one spending from a fund, and a seller is the one getting a payment.
The fractional spending problem is parameterized by $s_1$ and $s_2$: $s_1$ denotes the number of fractional payment a buyer can do in parallel;
$s_2$ denotes the fraction such that, when doing a fractional payment, $\frac{1}{s_2}$ of the balance of the fund is spent.
Contrarily to~\cite{breakingFp1}, we batch settlements and redeems of multiple funds by using a set as parameter of these operations.
We define the fractional-spending object type associated with $\funds$ and $\mu$ as a tuple $(Q, q_0, O, R, \Delta)$, where:
\begin{itemize}
    \item The set of possible states is $Q$ the set of all possible maps $q: \funds \mapsto \Nat\times (\clients \mapsto \Nat)$.
    Intuitively, each state of the object assigns each fund with the amount of currency it is holding along with a map registering the number of payments made to each client.
    
    \item The initialization map $q_0: \funds \mapsto \Nat\times(\clients\mapsto 0)$ assigns the initial balance to each fund but without any payment yet.
    
    \item Operations and responses of the type are defined as 
    \begin{equation*}
      \begin{aligned}
        O & = \{pay(F, s) : F\in\funds, s \in \clients\} \\
          & \cup \{settle(S_F): S_F\subseteq\funds\} \\
          & \cup \{redeem(S_F): S_F\subseteq\funds \} \\
          & \cup \{read(F) : F \in \funds\}
      \end{aligned}
    \end{equation*}
    and $R = \{true, false\} \cup \funds \cup (\Nat\times(\clients \mapsto \Nat))$.

    \item $\Delta$ is the set of valid state transitions. For a state $q \in Q$, a client $p \in\clients$ an operation $o \in O$, a response $r \in R$ and a new state $q' \in Q$, the tuple $(q, p, o, q',r) \in \Delta$ if and only if one of the following conditions is satisfied:
    \begin{itemize}
        \item \begin{equation*}\begin{aligned} o & = pay(F, s) \\
          & \land ((r = true \land p \in \mu(F) \land q(F) = (balance, payments) \\
          & \land \sum_{p'\in \clients}(payments[p']) \leq s_2 \text{ (no double spending)} \\
          & \land q'(F) = (balance', payments') \land balance' = balance \\
          & \land payments'[s] = payment[s] + 1 \\
          & \land \forall s' \in \clients \setminus \{ s \}: payments'[s'] = payments[s'] \text{ (no other client received a payment)} \\
          & \land \forall F' \in \funds \setminus \{F\} : q'(F') = q(F') \text{ (all other funds unchanged)}) \\
          & \lor (r = false \land q' = q))
          \end{aligned}
        \end{equation*}

        \item \begin{equation*}
        \begin{aligned}
        o & = settle(S_F)  \\ 
          & \land \forall F \in S_F: p \in \mu(F) \\
          & \land r = F' \land q(F') = (0, no\_payments) \land \forall p'\in \clients: no\_payments[p'] = 0 \land p \in \mu(F') \\
          & \land q'(F') = (balance', payments') \\
          & \land balance' = \sum_{F\in S_F: q(F) = (balance, payments)} (balance - \frac{balance}{s_2}\sum_{p'\in \clients}payments[p']) \\
          & \text{ (subtract amount spent)}\\
          & \land \forall p'\in \clients: payments'[p'] = 0 \text{ (no payment made yet)} \\
          & \land \forall F'\in\funds\setminus S_F: q'(F') = q(F') \text{ (all other funds unchanged)})
        \end{aligned}
        \end{equation*}
        
        \item \begin{equation*}
        \begin{aligned}
            o & = redeem(S_F) \\
              & \land \forall F \in S_F: (q(F) = (balance, payments) \land payments[p] > 0) \\
              & \land r = F' \land q(F') = (0, no\_payments) \land \forall p'\in \clients: no\_payments[p'] = 0  \land p \in \mu(F') \\
              & \land q'(F') = (balance', payments') \\
              & \land balance' = \sum_{F\in S_F: q(F) = (balance, payments)} (payments[p] \frac{balance}{s_2}) \\
              & \text{ (merge amount of payments received)}\\
              & \land \forall p'\in \clients: payments'[p'] = 0 \text{ (no payment made yet)} \\
              & \land \forall F'\in\funds\setminus S_F: q'(F') = q(F') \text{ (all other funds unchanged)})
        \end{aligned}
        \end{equation*}
        
        \item \begin{equation*}
        \begin{aligned}
            o & = read(F) \\
              & \land q = q' \text{ (all funds unchanged)} \\
              & \land r = (balance\_r, payments\_r) \land q(F) = (balance\_q, payments\_q) \\
              & \land balance\_r = balance\_q \text{ (returns the balance)} \\
              & \land p \in \mu(F) \implies payments\_r = payments\_q \text{ (an owner knows all payments)}\\
              & \land p \not\in \mu(F) \implies \forall p'\in\clients\setminus\{p\}: payments\_r[p']=0 \text{ (unknown payments to others)} \\
              & \land payments\_r[p] = payments\_q[p] \text{ (seller knows payments received)}
        \end{aligned}
        \end{equation*}
    \end{itemize}
\end{itemize}

In other words, the operation of fractional payment is denoted $pay(F,s)$ where $F$ is a fund to spend from and $s$ the seller.
Its effect is to increment the payments counter of $F$ such that when the buyer settles a set of funds $S_F$ (in which $F$ is) via $settle(S_F)$, its total value is subtracted from $\frac{1}{s_2}$ times the old balance for every payments recorded.
Similarly, when the seller settles a set of funds $S_F$ via $redeem(S_F)$, his balance is incremented for each payment by $\frac{1}{s_2}$ times the value of the fund the payment was made from. 
Both settle and redeem take effect into a new fresh fund that was not used before. 
Clients can use the $read(F)$ operation to read the state of a fund $F$ but due to the fractional nature of payments, they can only see the payments they received or the one they sent. 
Because the balance is updated only via settle or redeem, they can read the balance of any fund.

This specification is equivalent to the one proposed in \cite{breakingFp1} but there is few differences.
For simplicity, and because their implementations are not the same, we distinguish between operations of settlement as a buyer and as a seller: we call \textit{settle} settlements from a buyer and \textit{redeem} settlements from the seller. 
Moreover, we only consider one type of funds and \say{partially validated funds} are tracked using the counter of payments made to each seller.

\section{\protocolname: Full protocol}\label{sec-a:solution}
We detail \emph{\protocolname}, our new protocol for fractional spending resilient to adaptive adversaries.
\protocolname follows the same high-level steps as the protocol of~\cite{breakingFp1} but uses a secret quorums $\textit{sq}$ (presented in Section \ref{sec:sq}), implemented with a Ring VRF scheme $\textit{rVRF}$ (presented in Section \ref{sec:implem}), to protect the privacy of validators in the $(k_1, k_2)-$quorum system. 
It solves the fractional spending problem with $s_1=k_1$ parallel transactions and each transaction spends $\frac{1}{s_2}=\frac{1}{k_2}$ of the fund's balance, as defined in Section \ref{sec:fractionalproblem}.
Contrarily to~\cite{breakingFp1}, we batch settlements and redeems of multiple funds by using a set as parameter of these operations.

We start by Algorithm \ref{alg:valSelect} to complete the implementation of secret quorum. 
It determines $V_k$ unique validators based on a data $d$, used to select which validators will test with the VRF if they are selected or not for a quorum.
This algorithm is similar to quorum selection in \cite{breakingFp1}.
\begin{algorithm}
    \caption{Candidates selection for the VRF}
    \label{alg:valSelect}
    
    \begin{algorithmic}[1]
    \Procedure{\textit{SelectCandidates}}{$\textit{q}, V_k$}  
        \State $\textit{seed} := \textit{Hash}(\textit{q})$
        \State $\textit{candidates} := \emptyset$
        \State $j := 1$
        \While{$|\textit{candidates}| < V_k$}
            \State $v := \textit{validators}[\textit{Hash}(\textit{seed} || j)]$
            \If{$v \not \in \textit{candidates}$}
                \State $\textit{candidates} := \textit{candidates} \cup \{ v \}$
            \EndIf
            \State $j := j + 1$
        \EndWhile
    \State \Return $\textit{candidates}$ \Comment{$V_k$ unique validators based on $q$}
    \EndProcedure
    \end{algorithmic}
\end{algorithm}

\paratitle{States}
To be able to execute payments, clients are storing a set $\textit{assets}$ of funds they own. 
For later redeem, clients are storing a list of their transactions denoted $\textit{transactions}$ and $\textit{proofs}$ a map from each transaction to its secret quorum validation proof justifying its validity. 

For their validations duty, validators are storing a set $\textit{funds}$ of all funds in the system, initially set to some common arbitrary value, a set $\textit{seen}$ storing fund identifiers which they already validated a transaction from, a set $\textit{settled}$ storing fund identifiers that were already settled, a set $\textit{redeemed}$ storing transaction identifiers that were already redeemed and a map $\textit{signatures}$ from fund identifiers to the ring signatures they produced. 
Every node also store a list of all validators public keys $\textit{validators}$ used to generate ring signatures and select candidates.

\paratitle{Payments}
We present the protocol to achieve a payment between clients buyer and seller in Algorithms \ref{alg:paybuyer} and \ref{alg:paysellercomplete}. 
We denote $\langle m \rangle_{\textit{pk}}$ a message $m$ that is signed using the private key matching the public key $\textit{pk}$. 
We suppose that signatures are always verified and omit verification steps from algorithms.
The buyer signs a transaction $\textit{tx}$ with a fund and the public key of the seller, and sends $\textit{tx}$ to the seller.

\begin{algorithm}
    \caption{Payment protocol executed by Buyer}
    \label{alg:paybuyer}

    \begin{algorithmic}[1]
        \Procedure{$\textit{Pay}_{\textit{buyer}}$}{$\textit{fund}, \textit{pks}$}
        \State $\textit{tx} := \langle \textit{fund}, \textit{pks} \rangle_{\textit{pkb}}$
        \State send $(\textit{PAY}, \textit{tx})$ to $\textit{pks}$
        \State wait until $(\textit{CONFIRMPAY}, \textit{tx})$ received from $\textit{pks}$
        \EndProcedure
    \end{algorithmic}
\end{algorithm}

The seller uses $\textit{sq}.\textit{query}$ to check the validity of $\textit{tx}$ with a secret $(k_1, k_2)-$quorum of size $m$. 
Secretely selected validators check the validity of the transaction using
\[
\textit{valid}(\textit{tx}) := \textit{tx}.\textit{fund} \not \in (\textit{seen}\cup \textit{settled}) \land \textit{pkb}\in\textit{funds}[\textit{tx}.\textit{fund}].\textit{owner}
\]
If the transaction is valid, they mutate their state to take it into account with \textit{mutate}(\textit{tx}) : 
\[ 
\textit{seen} := \textit{seen} \cup \{ \textit{tx}.\textit{fund} \} \land
\textit{signatures}[\textit{tx}.\textit{fund}] := (\textit{tx}, r, N)
\]

A corrupted seller may never send a confirmation to the buyer. 
After waiting for a timeout, the buyer can assume that the payment was successful and if not he will learn about the amount not spent when executing a settlement. 
Moreover, the buyer has no way to check if a seller claiming that a payment failed is saying the truth. This can be fixed by making validators send ring signatures to both the seller and the buyer. 
For simplicity and because this issue is not considered in \cite{breakingFp1}, we omit this detail in the pseudo code. 

\begin{algorithm}
    \caption{Payment protocol executed by Seller}
    \label{alg:paysellercomplete}
    
    \begin{algorithmic}[1]
    \Procedure{$\textit{Pay}_{\textit{seller}}$}{(PAY, $\langle \textit{tx} \rangle_{\textit{pkb}}$)} \Comment{upon reception of a PAY message}
        \State $p := \textit{sq}.\textit{query}(\langle tx \rangle_{pkb}, m)$
        \If{$\textit{sq}.\textit{verify}(p, \langle tx \rangle_{pkb})$} \Comment{\textit{tx} is valid for a (secret) $(k_1, k_2)-$quorum}
            \State $\textit{proofs}[tx] := p$
            \State $\textit{transactions} := \textit{transactions} \cup \{ \textit{tx} \}$
            \State send $(\textit{CONFIRMPAY}, \textit{tx})$ to $\textit{pkb}$
        \Else
            \State \Return ERROR
        \EndIf
        \EndProcedure
    \end{algorithmic}
\end{algorithm}

\paratitle{Settlements}
Algorithm \ref{alg:fundsSettleBuyer} and \ref{alg:fundsSettleValidators} describe the protocol to achieve settlements. 
After receiving a request to settle a set of funds\footnote{Unlike \cite{breakingFp1}, we merge settlements of multiple funds to amortize the cost.} from a buyer, each validator sends to others validators the ring signatures generated when validating a transaction from a fund in $S_F$, if they did.
After receiving $n-f$ such data from others, each validator locally reconstructs the transactions and the corresponding signatures attesting their validity. 
For each valid transaction, he will subtract the transaction from the sum of the balance of each fund in $S_F$. 
The unspent balance is credited on a new fund $F'$ and its data is sent to the buyer in a confirmation message. 
After receiving $n-f$ same $F'$, the buyer can add $F'$ to its \textit{assets} and remove settled ones. 
\begin{algorithm}
    \caption{Settlement for the buyer}
    \label{alg:fundsSettleBuyer}

    \begin{algorithmic}[1]
        \Procedure{$\textit{Settle}_{\textit{buyer}}$}{$S_{F}$}
        \State send $(\textit{SETTLE}, S_{F})$ to all validators 
        \State wait until $n-f$ same $(\textit{CONFIRMSETTLE}, F')$ received
        \State $\textit{assets} = \textit{assets} \setminus S_{F} \cup \{ F' \}$
        \EndProcedure
    \end{algorithmic}

\end{algorithm}
\begin{algorithm}
    \caption{Settlement for validators}
    \label{alg:fundsSettleValidators}

    \begin{algorithmic}[1]
        \Procedure{$\textit{Settle}_{\textit{validator}}$}{$(\textit{SETTLE}, S_{F})$} \Comment{upon reception of a SETTLE message from $\textit{pkb}$}
        \State $\textit{records} := \emptyset$
        \For{$F \in S_F$}
            \If{$F\in \textit{seen} \land F\not\in \textit{settled}$}
                \State $\textit{records} := \textit{records} \cup \{ \textit{signatures[F]}\}$
            \EndIf
        \EndFor
        
        \State gossip $(\textit{SETTLERECORD}, S_{F}, \textit{records})$ to all validators
        \State wait for $n-f$ \textit{SETTLERECORD} from validators
        
        \State $\textit{records} := \textit{records} \cup \{\textit{records'}: (\textit{SETTLERECORD}, S_{F}, \textit{records'})$ received$\}$

        \State $\textit{txs} := \{\textit{tx}: (\textit{tx}, r, N) \in \textit{records} \}$
        \State $\textit{proofs} := \{(\textit{tx}, \{ r: (\textit{tx}, r, N)\in \textit{records} \}: tx \in \textit{txs}\}\}$ 

        \State $\textit{valids} := \emptyset$
        \For{$(\textit{tx}, \textit{rings}) \in \textit{proofs}$}
            \State $N := $ majority in $\textit{proofs}[\textit{tx}]$
            \If{$\textit{sq.}\textit{verify}(\{\textit{rings}, N\},\textit{tx})$} \Comment{\textit{tx} is valid for a (secret) $(k_1, k_2)-$quorum}
                \State $\textit{valids} := \textit{valids} \cup \{ \textit{tx} \}$
            \EndIf
        \EndFor

        \State $\textit{balance} := \sum_{F\in S_F} \textit{funds}[F].balance$
        \For{$\textit{tx}\in \textit{valids}$}
            \State $\textit{balance} := \textit{balance} - \textit{funds}[\textit{tx}.\textit{fund}].\textit{balance} / s_2$
        \EndFor
        
        \State $\textit{settled} := \textit{settled} \cup S_F$
        
        \If{$\textit{balance} > 0$}
            \State $funds = funds \cup \{ F' \}$ for $F'$ such that $F'.\textit{owner} = \textit{pkb}, F'.\textit{balance} = \textit{balance}$
            \State send $(\textit{CONFIRMSETTLE}, S_{F}, \textit{F'})$ to $\textit{pkb}$
        \Else
            \State send $(\textit{CONFIRMSETTLE}, S_{F}, \bot)$ to $\textit{pkb}$
        \EndIf
        \EndProcedure
    \end{algorithmic}
\end{algorithm}

\paratitle{Redeem} The protocol to redeem is presented in Algorithms \ref{alg:couponsSettleSeller} and \ref{alg:couponsSettleValidators}. 
It is a simple broadcast from the seller of the validation proof from a secret quorum collected during the validation of each transactions from a fund to redeem. 
Validators check the proofs and merge the value of each valid transaction into a single new fund. 
After receiving $n-f$ same confirmations for the same fund $F'$, $F'$ is a valid new fund that can be added to the list of \textit{assets} owned by the Seller. 
The transactions that were redeemed can be erased from the memory of the seller.

\begin{algorithm}
    \caption{Redeem for the seller}
    \label{alg:couponsSettleSeller}

    \begin{algorithmic}[1]
        \Procedure{$\textit{Redeem}_{\textit{seller}}$}{$S_F$}
        \State $\textit{txs} := \{ (\textit{tx},\textit{proofs}[\textit{tx}]) | \textit{tx} \in \textit{transactions}: \textit{tx}.\textit{fund} \in S_F \}$
        \State broadcast $(\textit{REDEEM}, txs)$ to all validators 
        \If{$n-f$ same $(\textit{CONFIRMREDEEM}, F')$ received}
            \State $\textit{assets} := \textit{assets} \cup \{ F' \}$
            \State $\textit{transactions} := \textit{transactions} \setminus \textit{txs}$
            \State\Return SUCCESS
        \ElsIf{$f+1\ (\textit{INVALIDREDEEM})$ received}
            \State\Return ERROR
        \EndIf
        \EndProcedure
    \end{algorithmic}
    
\end{algorithm}

\begin{algorithm}
    \caption{Redeem for validators}
    \label{alg:couponsSettleValidators}
    
    \begin{algorithmic}[1]
        \Procedure{$\textit{Redeem}_{\textit{validator}}$}{$(\textit{REDEEM}, \textit{txs})$} \Comment{upon reception of a REDEEM message from $\textit{pks}$}
        \State $\textit{balance} := 0$ 
        \State $\textit{valids} := \emptyset$
        \For{$(\textit{tx},\textit{proof}) \in \textit{txs}$}
            \If{$\textit{sq}.\textit{verify}(\textit{proof},\textit{tx}) \land \textit{tx}.\textit{pks} = \textit{pks} \land \textit{tx} \not\in \textit{redeemed}$}
                \State $\textit{balance} := \textit{balance} + \textit{funds}[\textit{tx}.\textit{fund}].\textit{balance} / s_2$
                \State $\textit{valids} := \textit{valids} \cup \{ \textit{tx} \}$
            \Else
                \State send $(\textit{INVALIDREDEEM})$ to $\textit{pks}$
                \State\Return ABORT
            \EndIf
        \EndFor
        \State $\textit{redeemed} := \textit{redeemed} \cup \{ \textit{tx}: (\textit{tx},\textit{proof} \in \textit{txs}) \}$

        \State $\textit{funds} := \textit{funds} \cup \{ F' \}$ for $F'$ such that $F'.\textit{owner} = \textit{pks}, F'.\textit{balance} = \textit{balance}$
        \State send $(\textit{CONFIRMREDEEM}, F')$ to $\textit{pks}$
        \EndProcedure
    \end{algorithmic}
\end{algorithm}

\section{Multi-owner without Consensus}\label{sec-a:noconsensus}
Guerraoui et al., prove in \cite{consensusNumberCryptocurrency} that a payment system --- formalized as the asset transfer problem --- where accounts are shared by $k$ owners has consensus number $k$.
We argue here that \protocolname handles unbounded shared accounts without Consensus. 
The proof from \cite{consensusNumberCryptocurrency} relies on different owners of a same account trying to send payments with \emph{different amounts} such that only one of them is valid.
This allows each owner to compute who won this hidden leader election by reading the remaining balance of the shared account.

\section{Fault-tolerance of \texorpdfstring{$(k_1, k_2)$}{(k1,k2)}-quorum Systems}\label{sec-a:ftk1k2}
Bazzi and Tucci propose in~\cite{breakingFp1} a probabilistic construction of a $(k_1, k_2)-$quorum systems, but assuming $n > 8f$.
Indeed, quorums are selected at random over $\Pi$ which implies: $n$ is large enough to ensure each selected quorums has sufficient intersection with other quorums to stop validation when the bound is reached; but also sufficient non-intersection to guarantee parallel execution when the bound is not reached. 
This construction may appear less resilient than other constructions with traditional quorum systems, which have $n > 3f$. 
This resilience, however, is global for $k$ parallel transactions.
Achieving $k$ parallel transactions using traditional quorum systems would require to split the nodes in $k$ partitions of size $m = n/k$. 
The resilience of such solution would then be reduced to the one of a unique partition: corrupting $m/3$ nodes from one quorum would allow double spending.
Thus, the probabilistic quorum systems construction from~\cite{breakingFp1} provides better resilience against corruptions than traditional quorums for $k$ greater or equal to $3$.

\end{document}